\documentclass{article}

% Recommended, but optional, packages for figures and better typesetting:
\usepackage{microtype}
\usepackage{graphicx}
\usepackage{subfigure}
\usepackage{booktabs} % for professional tables
\usepackage{amsmath, amssymb, amsfonts}
\usepackage[normalem]{ulem}
\usepackage{multirow}

% hyperref makes hyperlinks in the resulting PDF.
% If your build breaks (sometimes temporarily if a hyperlink spans a page)
% please comment out the following usepackage line and replace
% \usepackage{icml2019} with \usepackage[nohyperref]{icml2019} above.
\usepackage{hyperref}

% Attempt to make hyperref and algorithmic work together better:

% Use the following line for the initial blind version submitted for review:
%\usepackage{icml2020}

% If accepted, instead use the following line for the camera-ready submission:
\usepackage[accepted]{icml2020}

\newcommand{\codecomment}[1]{\textbf{\color{black}// #1}}
%  \newcommand{\rafael}[1]{\textbf{\color{blue}[Rafael: #1]}}
%  \newcommand{\annlee}[1]{\textbf{\color{magenta}[Ann: #1]}} % new comments
%  \newcommand{\ann}[1]{\textbf{\color{cyan}[Ann: #1]}} % old comments 
%  \newcommand{\nic}[1]{\textbf{\color{purple}[Nic: #1]}}
%  \newcommand{\add}[1]{\textbf{\color{orange} #1}}
% % \newcommand{\remove}[1]{\textbf{\color{green} [#1]}}
%  \newcommand\remove{\bgroup\markoverwith{\textcolor{orange}{\rule[.5ex]{2pt}{1pt}}}\ULon}

% The \icmltitle you define below is probably too long as a header.
% Therefore, a short form for the running title is supplied here:
\icmltitlerunning{Confidence Sets and Hypothesis Tests in a Likelihood-Free Inference Setting}

\def\X{{\mathbf{X}}}
\def\x{{\mathbf{x}}}
\def\P{\mathbb{P}}
\def\I{\mathbb{I}}
\def\O{\mathbb{O}}
\def\Or{\mathbb{OR}}
\def\E{\mathbb{E}}

\def\L{{\mathcal L}}
\def\D{{\mathcal D}}
\def\T{{\mathcal T}}
\def\Pois{\textrm{Poisson}}

\usepackage{amsfonts, amsmath, amssymb, amsthm, bm, 
            latexsym, mathrsfs, thmtools, wasysym}
\usepackage{cleveref}
\declaretheorem[name=Theorem, refname={theorem, theorems}, Refname={Theorem, Theorems}, parent=section]{theorem}

\declaretheorem[name=Lemma, refname={lemma, lemmas}, Refname={Lemma, Lemmas}, sibling=theorem]{lemma}
\declaretheorem[name=Proposition, refname={prop, props}, Refname={Prop, Props}, sibling=theorem]{Prop}
\declaretheorem[name=Assumption, refname={assumption, assumptions}, Refname={Assumption, Assumption}, sibling=theorem]{assumption}

\begin{document}

\twocolumn[
\icmltitle{Confidence Sets and Hypothesis Testing in a Likelihood-Free Inference Setting}

% It is OKAY to include author information, even for blind
% submissions: the style file will automatically remove it for you
% unless you've provided the [accepted] option to the icml2020
% package.

% List of affiliations: The first argument should be a (short)
% identifier you will use later to specify author affiliations
% Academic affiliations should list Department, University, City, Region, Country
% Industry affiliations should list Company, City, Region, Country

% You can specify symbols, otherwise they are numbered in order.
% Ideally, you should not use this facility. Affiliations will be numbered
% in order of appearance and this is the preferred way.
\icmlsetsymbol{equal}{*}

\begin{icmlauthorlist}
\icmlauthor{Niccol\`o Dalmasso}{CMU}
\icmlauthor{Rafael Izbicki}{SaoCarlos}
\icmlauthor{Ann B.~Lee}{CMU}
\end{icmlauthorlist}

\icmlaffiliation{CMU}{Department of Statistics \& Data Science, Carnegie Mellon University, Pittsburgh, USA}
\icmlaffiliation{SaoCarlos}{Department of Statistics, Federal University of S\~ao Carlos, S\~ao Paulo, Brazil}
\icmlcorrespondingauthor{Niccol\`o Dalmasso}{ndalmass@stat.cmu.edu}

% You may provide any keywords that you
% find helpful for describing your paper; these are used to populate
% the "keywords" metadata in the PDF but will not be shown in the document
\icmlkeywords{Likelihood-free inference, Frequentist Statistics, Confidence Sets, Hypothesis Testing}

\vskip 0.3in
]

% this must go after the closing bracket ] following \twocolumn[ ...

% This command actually creates the footnote in the first column
% listing the affiliations and the copyright notice.
% The command takes one argument, which is text to display at the start of the footnote.
% The \icmlEqualContribution command is standard text for equal contribution.
% Remove it (just {}) if you do not need this facility.

\printAffiliationsAndNotice{}  % leave blank if no need to mention equal contribution
%\printAffiliationsAndNotice{\icmlEqualContribution} % otherwise use the standard text.

\begin{abstract} 
%\annlee{6/28 Are we allowed to edit the abstract? Before the final submission of the abstract we had a version that exceeded the limit by a few characters. I don't remember if the version that didn't make it was more accurate (or just too long).} \nic{I've just copied over the ArXiv version abstract, which is slightly more accurate and also includes the Github link to our code.}
Parameter estimation, statistical tests and confidence sets are the cornerstones of classical statistics that  allow  scientists to make inferences about the underlying process that generated the observed data. A key question is whether one can still construct hypothesis tests and confidence sets with proper coverage and high power in a   so-called likelihood-free inference (LFI) setting; that is, a setting where the likelihood is not explicitly known but one can forward-simulate observable data according to a stochastic model.
In this paper, we present \texttt{ACORE} (Approximate Computation via Odds Ratio Estimation), a frequentist approach to LFI that first formulates the classical likelihood ratio test (LRT) as a parametrized classification problem, and then uses the equivalence of tests and confidence sets to build confidence regions for parameters of interest. 
We also present a goodness-of-fit procedure 
for checking whether the constructed tests and confidence regions are valid. \texttt{ACORE} is based on the key observation that the LRT statistic, the rejection probability of the test, and the coverage of the confidence set are  conditional distribution functions which often vary smoothly as a function of the parameters of interest. Hence, instead of relying solely on  samples simulated at fixed parameter settings (as is the convention in standard Monte Carlo solutions), one can leverage machine learning tools and data simulated in the neighborhood of a parameter to improve estimates of quantities of interest. We demonstrate the efficacy of \texttt{ACORE} with both theoretical and empirical results. Our implementation is available on \href{https://github.com/Mr8ND/ACORE-LFI}{\texttt{Github}}.
 \end{abstract}

\section{Introduction}

Parameter estimation, statistical tests and confidence sets are the cornerstones of classical statistics that relate observed data to properties of the underlying statistical model. Most frequentist procedures with good statistical performance (e.g.,  high power) require explicit knowledge of a likelihood function. However, in many science and engineering applications, complex phenomena are modeled by forward simulators  that \textit{implicitly} define a likelihood function: For example, given input parameters $\theta$, a statistical model of our environment, climate or universe may combine deterministic dynamics with random fluctuations to produce synthetic data $\mathbf{X}$. Simulation-based inference without an explicit likelihood is called {\em likelihood-free inference} (LFI).

The literature on LFI is vast. 
Traditional LFI methods, such as Approximate Bayesian Computation (ABC; \citealt{beaumont2002approximate,marin2012approximate,sisson2018handbook}), estimate posteriors by using simulations sufficiently close to the observed data, hence bypassing the likelihood. More recently, several
approaches that  leverage machine learning algorithms have been proposed; these either directly estimate the posterior distribution \citep{marin2016abcrf,chen2019gaussiancopula,izbicki2019abc,greenberg2019posterior} or the likelihood function \citep{izbicki2014, Thomas2016likelihood,price2018bayesian, Ong2018lfilikelihood,lueckmann2019likelihood, papamakarios2019sequential}.
We refer the reader to \citet{Cranmer2019frontier_sim_based} for a recent review of the field. 

A question that has not received much attention so far is whether one, in an LFI setting, can construct inference techniques with good frequentist properties. 
Frequentist procedures have nevertheless played an important role in many fields. In high energy physics for instance,  classical statistical techniques (e.g., hypothesis testing for outlier detection)   have resulted in discoveries of new physics and other successful applications~\citep{Feldman1998UnifiedApproach, Cranmer2015PraticalStatsLHC, Cousins2018Lectures}.
Even though controlling type I error probabilities is important in these applications,
most LFI methods do not have guarantees on validity or power. 
Ideally, a unified LFI approach should 
\vspace{-3mm}
\begin{itemize}
\item be {\em computationally efficient} in terms of the number of required simulations,
\vspace{-2mm}
\item handle {\em high-dimensional data} from different sources (without, e.g., predefined summary statistics),
\vspace{-2mm}
\item  produce hypothesis tests and confidence sets that are  {\em valid}; that is, have the nominal type I error or confidence level,
\vspace{-2mm}
\item  produce hypothesis tests with {\em high power} or, equivalently, confidence sets with a small expected size,
\vspace{-2mm}
\item provide {\em diagnostics} for checking empirical coverage or for checking how well the estimated likelihood fits simulated data.
\end{itemize}
\vspace{-3mm}
In this paper, we present  \texttt{ACORE} (Approximate Computation via Odds Ratio Estimation), a frequentist approach to LFI, which addresses the above mentioned concerns.

Figure~\ref{fig:work_structure} summarizes the \texttt{ACORE} work structure: \texttt{ACORE} first compares synthetic data from the simulator $F_\theta$ to a reference distribution $G$ by computing an ``{Odds Ratio}''.
The odds ratio can be learnt with a probabilistic classifier, such as a neural network with a softmax layer, suitable for the data at hand. As we shall see, the estimated odds ratio is an approximation of the likelihood ratio statistic
(Proposition~\ref{prop::consistency}).
The \texttt{ACORE} test statistic (Equation~\ref{eq:odds_ratio_statistic}), together with an estimate of the ``{Critical Value}'' (Algorithms~\ref{alg:estimate_thresholds2} and \ref{alg:estimate_thresholds_conf2}), can be used for hypothesis testing or for finding a confidence set for $\theta$. \texttt{ACORE} also includes ``{Diagnostics}'' (Section~\ref{sec:coverage_gof}) for computing the empirical coverage of the constructed confidence set for $\theta$.

At the heart of \texttt{ACORE} is the key observation that the likelihood ratio statistic, the critical value of the test, and the coverage of the confidence set are  conditional distribution functions which often vary smoothly as a function of the (unknown) parameters of interest. Hence, instead of relying solely on  samples simulated at fixed parameter settings (as is the convention in standard Monte Carlo solutions), one can leverage machine learning tools and data simulated in the neighborhood of a parameter to improve estimates of quantities of interest and decrease the total number of simulated data points. Our contribution is three-fold:
\vspace{-1mm}
\begin{enumerate}
 \item   a new procedure for estimating the {\em likelihood ratio statistic},  which uses probabilistic classifiers and does not require repeated sampling at each $\theta$ or a separate interpolation or calibration step;
 \vspace{-1mm}
 \item an efficient procedure for estimating the {\em critical value} that guarantees valid tests and confidence sets, based on quantile regression without repeated sampling at each $\theta$;
 \vspace{-1mm}
 \item a new {\em goodness-of-fit technique} for computing empirical coverage of constructed confidence sets as a function of the unknown parameters $\theta$.
\end{enumerate}

Finally,  \texttt{ACORE} is simple and modular by construction. One can easily switch out, generalize or take pieces of the framework and apply it to any similar machine-learning-based LFI setting.
The theoretical results of Section~\ref{sec:hyp_test_via_or} hold for a general setting. In addition, given the vast arsenal of existing probabilistic classifiers developed in the literature, \texttt{ACORE} can be applied to many different types of complex data $\X$ (e.g., images, time series and functional data). In Section~\ref{sec: toy_example}, we show empirical results connecting the power of the constructed hypothesis tests to the performance of the classifier. 
Note also that Algorithms~\ref{alg:estimate_thresholds2} and \ref{alg:estimate_thresholds_conf2} for estimating parametrized critical values apply to any hypothesis test on $\theta$ of the form of Equation~\ref{eq:hypothesis_testing} for any test statistic $\tau$. The goodness-of-fit procedure in Section~\ref{sec:coverage_gof} for checking empirical coverage as a function of $\theta$ is also not tied to odds ratios.

\subsection{Related Work} 
The problem of constructing confidence intervals with good frequentist properties has a long history in statistics \citep{neyman1937inversion, Feldman1998UnifiedApproach, chaung2000hybridresampling}.
One of the earlier simulation-based approaches was developed in high energy physics (HEP) by \citet{Diggle1984HistogramEstimators}; their proposed scheme of estimating the likelihood and likelihoood ratio statistic nonparametrically by histograms of photon counts would later become a key component in the discovery of the Higgs Boson \citep{Aad2012HiggsBoson}. However, traditional approaches for building confidence regions and hypothesis tests in LFI rely on a  series of Monte Carlo samples at each parameter value $\theta$ 
\citep{barlow1993MC, weinzierl2000introduction, chad2009ConfidenceRegions}. 
Thus, these approaches quickly become inefficient with large or continuous parameter spaces. Traditional nonparametric approaches also have difficulties handling high-dimensional data without losing key information.

LFI has recently benefited from using powerful machine learning tools like deep learning to  estimate likelihood functions and likelihood ratios for complex data. 
Successful application areas include HEP \cite{guest2018deeplearningLHC}, astronomy \cite{Alsing2019delfi} and neuroscience \cite{Goncalves2019lfi_posterior}.
\texttt{ACORE} has some similarities to the work of
\citet{Cranmer2015LikRatio} which also uses machine learning
methods for frequentist inference in an LFI setting.
Other elements of \texttt{ACORE} such as leveraging the ability of ML algorithms to smooth over parameter space turning a density ratio estimate into a supervised classification problem have also previously been used in LFI settings: Works that smooth over parameter space include, e.g., Gaussian processes \citep{Frate2017GP_LFI, florent2018GP_LFI} and neural networks \cite{Baldi2016ParametrizedNN}. 
Works that turn a density ratio into a classification problem include applications to generative models (see  \citealt{shakir2016Implicit_models} for a review), and Bayesian LFI  \citep{Thomas2016likelihood,Gutmann2018LFI, Dinev2018DIREDensityRatio,Hermans}. Finally, like \texttt{ACORE}, \citet{thornton2017effective} explores frequentist guarantees of confidence regions; however those regions are built under a Bayesian framework.

\textbf{Novelty.} 
What distinguishes the \texttt{ACORE} approach from other  related work is that it uses an efficient procedure for estimating the (i) likelihood ratio, (ii) critical values and (iii) coverage of confidence sets across the entire parameter space, without the need for an extra  interpolation or calibration step (as in traditional Monte Carlo solutions and more recent ML approaches). To the best of our knowledge, (ii) and (iii) are entirely novel in the LFI literature. In contrast to other methods that estimate (i), \texttt{ACORE} does not make parametric assumptions or require an additional calibration step, and it can accommodate all types of hypotheses. We provide theoretical guarantees on our procedures in terms of power and validity (Section \ref{sec:hyp_test_via_or}; proofs in Supplementary Material~\ref{appendix:proofs}). We also offer a scheme for how to choose ML algorithms and the number of simulations so as to have good power properties and valid inference in practice.

{\bf Notation.}
Let $F_{\theta}$ with density $f_\theta$  represent the stochastic forward simulator for a sample point $\X \in \mathcal{X}$ at parameter $\theta \in \Theta$.  We denote  i.i.d ``observable'' data from $F_{\theta}$ by $\D=\left\{ \X_1^{\text{obs}},\ldots,\X_n^{\text{obs}} \right\}$, and the actually observed or measured data by $D=\left\{ \x_1^{\text{obs}},\ldots,\x_n^{\text{obs}} \right\}$. The likelihood function $\L(\D;\theta)=\prod_{i=1}^{n} f_\theta(\X_i^{\text{obs}})$.

\section{Statistical Inference in a Traditional Setting}~\label{sec::trad_inference}

\vspace{-6mm}

We begin by reviewing elements of traditional statistical inference that play a key role in \texttt{ACORE}.

{\bf Equivalence of tests and confidence sets.} A classical approach to constructing a confidence set for an unknown parameter $\theta \in \Theta$ 
is to  
invert a series of hypothesis tests \citep{neyman1937inversion}: Suppose that for each possible value $\theta_0 \in \Theta$, there is a level $\alpha$ test $\delta_{\theta_0}$ of  
\begin{equation}
H_{0, \theta_0}: \theta=\theta_0 \ \ \mbox{versus} \ \ H_{1, \theta_0}: \theta \neq \theta_0;
\label{eq:test_Neyman}
\end{equation}
that is, a test $\delta_{\theta_0}$  where the type I error (the probability of erroneously rejecting a true null hypothesis $H_{0, \theta_0}$) is no larger than $\alpha$.   For observed data $\D=D$, now define $R(D)$ as the  set of all parameter values $\theta_0 \in \Theta$ for which the test $\delta_{\theta_0}$  does not reject $H_{0,\theta_0}$.
Then, by construction, the random set $R(\D)$ satisfies
$$\P \left[\theta_0 \in R(\D) \ \middle\vert \ \theta=\theta_0\right] \geq 1-\alpha$$
for all $\theta_0 \in \Theta$. That is,  $R(\D)$ defines a $(1-\alpha)$ {\em confidence set} for $\theta$. Similarly, we can define a test with a desired significance level  from a confidence set with a certain coverage.
 
{\bf Likelihood ratio test.} A general form of hypothesis tests that often leads to high power is the likelihood ratio test (LRT). Consider testing 
\begin{equation}   \label{eq:hypothesis_testing}
 H_0: \theta \in \Theta_0 \ \ \mbox{versus} \ \ H_1: \theta \in \Theta_1,
\end{equation}
where $\Theta_1=\Theta \setminus \Theta_0$.  For the {\em likelihood ratio (LR) statistic}, 
\begin{equation}
\label{eq::LRT}
\Lambda(\D; \Theta_0) =  
\log \frac{\sup_{\theta \in \Theta_0}\L(\D;\theta)}{\sup_{\theta \in \Theta}\L(\D;\theta)},
\end{equation}
the LRT of hypotheses (\ref{eq:hypothesis_testing})
rejects $H_0$ when $\Lambda(D; \Theta_0) < C$ for some constant $C$.

Figure~\ref{fig:neyman_inversion_statistics} illustrates the construction of confidence sets for $\theta$ from level $\alpha$ likelihood ratio tests (\ref{eq:test_Neyman}). 
The critical value for each such test $\delta_{\theta_0}$ is 
$C_{\theta_0} =  \left\{C: \P \left[\Lambda(\D; \theta_0)< C \mid \theta=\theta_0 \right] = \alpha \right\}.$

\begin{figure}[!ht]
    \centering
    \includegraphics[width=0.48\textwidth]{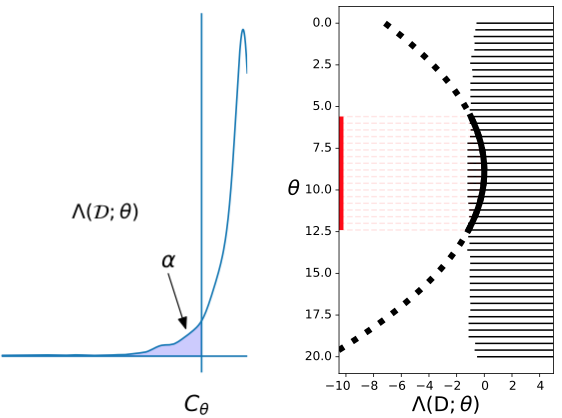}
    \vspace*{-6mm}
    \caption{Constructing confidence intervals from hypothesis tests. {\em Left:} For each $\theta \in \Theta$, we find the critical value $C_{\theta}$ that rejects the null hypothesis $H_{0,\theta}$ at level $\alpha$; that is,   $C_{\theta}$ is the $\alpha$-quantile of the distribution of the likelihood ratio statistic  $\Lambda(\mathcal{D}; \theta)$ under the null. {\em Right:} The horizontal lines represent the acceptance region for each $\theta \in \Theta$. Suppose we observe data $\D=D$. The confidence set for $\theta$ (indicated with the red line) consists of all $\theta$-values for which the observed test statistic $\Lambda(D; \theta)$ (indicated with the black curve) falls in the acceptance region.
    }
    \label{fig:neyman_inversion_statistics}
\end{figure}

\begin{figure}[!ht]
    \centering
    \includegraphics[width=0.49\textwidth]{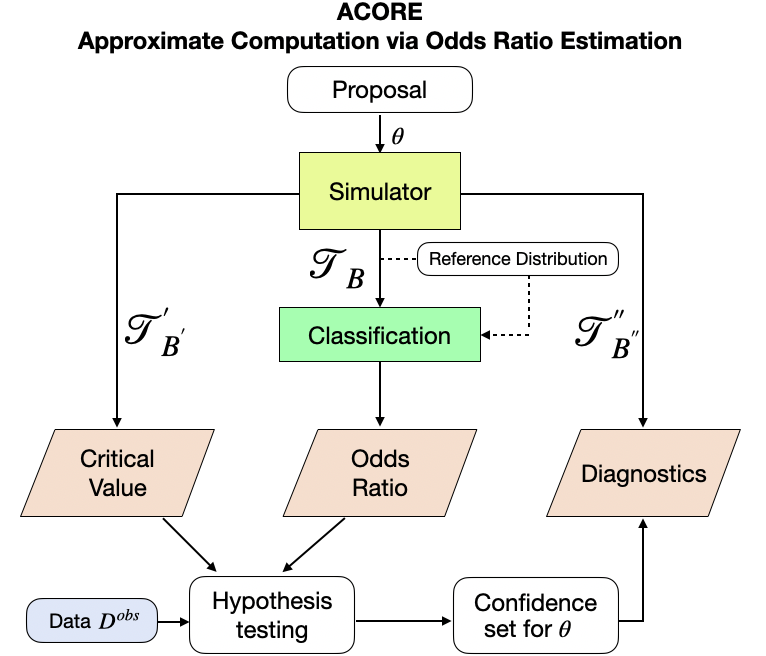}
    \vspace*{-5mm}
    \caption{
    Schematic diagram of \texttt{ACORE}. The simulator provides synthetic observable data $\T_B$ for learning a parametrized odds ratio via probabilistic classification.  The simulator also generates a separate sample $\T'_{B'}$ for learning critical values as a function of $\theta \in \Theta$. Once data $\D^{\rm obs}$ are observed, the odds ratio can be used to construct hypothesis tests or confidence sets for $\theta$. \texttt{ACORE}  provides diagnostics for computing the empirical coverage of constructed confidence sets as a function of the (unknown) parameter $\theta$. The three main parts of \texttt{ACORE} (critical value, odds ratio, diagnostics) are separate modules. Each module leverages machine learning methods in the training phase and is amortized, i.e., they perform inference on new data without having to be retrained.
    }
    \label{fig:work_structure}
\end{figure}

\section{ \texttt{ACORE}: Approximate Computation via Odds Ratio Estimation} 
In a likelihood-free inference setting, we cannot directly evaluate the likelihood ratio  statistic. Here we describe the details of how a simulation-based approach (\texttt{ACORE}, Figure~\ref{fig:work_structure}) can lead to hypothesis tests and confidence sets with good frequentist properties.

\subsection{Hypothesis Testing via Odds Ratios} \label{sec:hyp_test_via_or}
We start by simulating a labeled sample for computing odds ratios. The estimated odds ratio then defines a new test statistic that we use in place of the unknown likelihood ratio statistic.

{\bf Simulating a labeled sample.}
Let $G$ be a distribution with larger support than  $F_\theta$ for all $\theta \in \Theta$. The distribution $G$ could for example be a  dominating distribution which it is easy to sample from.
We use  $F_{\theta}$ and $G$ to simulate a labeled training sample $T_B = \{\theta_i,\x_i,y_i\}_{i=1}^B$ for estimating odds ratios. The random sample  $\T_B= \{\theta_i,\X_i,Y_i\}_{i=1}^B$ is identically distributed as $(\theta,\X,Y)$, where  the parameters $\theta \sim r_{\Theta}$ (a fixed proposal distribution over $\Theta$), the ``label''  $Y \sim \text{Ber}(p)$ (a Bernoulli distribution with known $p$ with $Y$ independent of $\theta$), $\X|\theta,Y=1\sim F_{\theta}$ and $\X|\theta,Y=0\sim G$. That is, the label $Y_i$ is the indicator that the sample point $\X_i$  was generated from $F_\theta$ rather than $G$. We  call $G$  a ``reference distribution'' as we are comparing $F_\theta$ for different $\theta$ with this distribution. For all our experiments in this work we use p=1/2; other choices could account for computational differences in sampling from $F_\theta$ versus $G$. (Algorithm \ref{alg:joint_y} in Supplementary Material~\ref{appendix::simulation_training} summarizes our procedure.)

{\bf Odds ratios.}
For fixed $\x$, we define the {\em odds} at $\theta$ as $${\O}(\x;\theta):=\frac{\P(Y=1|\theta,\x)}{\P(Y=0|\theta,\x)},$$ 
and the {\em odds ratio} at $\theta_0,\theta_1 \in \Theta$ as
$$\Or(\x;\theta_0,\theta_1):= \frac{\O(\theta_0;\x)}{\O(\theta_1;\x)}.$$ 
One way of interpreting the odds ${\O}(\theta,\X)$ is to regard it as a measure of the chance that $\X$ was generated from $F_\theta$. That is, a large odds ${\O}(\theta,\x)$ reflects the
fact that it is plausible that $\x$ was generated from $F_\theta$ (rather than $G$).
Thus, $\Or(\x;\theta_0,\theta_1)$ measures the  plausibility that $\x$
was generated from $\theta_0$ rather than $\theta_1$.
When testing (\ref{eq:hypothesis_testing}), we therefore reject $H_0$ if    $\sup_{\theta_0 \in \Theta_0}\inf_{\theta_1 \in \Theta} \sum_{i=1}^n  \log \left( \Or(\X_i^{\text{obs}};\theta_0,\theta_1)\right) < C$, for some constant $C$.
By Bayes rule, this is just the likelihood ratio test of (\ref{eq:hypothesis_testing}).

{\bf Hypothesis testing in an LFI setting.}
In an LFI setting, we cannot directly evaluate the likelihood ratio statistic (\ref{eq::LRT}). The advantage of rewriting the LRT in terms of odds ratios is that we can forward-simulate a labeled training sample $\mathcal{D}_B$, as described above, and then use a probabilistic classifier (suitable for the data at hand) to efficiently estimate the odds ratios $\Or(\x;\theta_0,\theta_1)$ for all $\theta_1, \theta_2 \in \Theta$:
The probabilistic classifier compares data from the forward simulator $F_\theta$ with data from the reference distribution $G$ and returns a parametrized odds estimate $\widehat{\O}(\x; \theta)$, which is a function of $\theta \in \Theta$. We can directly compute the 
odds ratio estimate $\widehat{\Or}(\x; \theta_0,\theta_1)$
at any two values $\theta_1, \theta_2 \in \Theta$  
from $\widehat{\O}(\x; \theta)$. There is no need for a separate training step. 
 
We reject $H_0$ if the \texttt{ACORE} test statistic defined as 
\begin{align}
\label{eq:odds_ratio_statistic}
 \tau(\D; \Theta_0):=\sup_{\theta_0 \in \Theta_0} \ \inf_{\theta_1 \in \Theta} \sum_{i=1}^n  \log \left( \widehat{\Or}(\X_i^{\text{obs}};\theta_0,\theta_1)\right)
\end{align}
is small enough for observed data $\D=D$.
If the probabilities learned by the classifier 
are well estimated, $\tau$
is exactly the likelihood ratio statistic:
\begin{Prop}[Fisher Consistency] \label{prop::consistency}
If $\ \widehat{\P}(Y=1|\theta,\x)=\P(Y=1|\theta,\x)$ for every $\theta$ and $\x$, 
then the \texttt{ACORE} test statistic (\ref{eq:odds_ratio_statistic}) is the likelihood ratio statistic
(Equation~\ref{eq::LRT}).
\end{Prop}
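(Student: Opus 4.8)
The plan is to show directly that, under the stated assumption, each estimated odds ratio coincides with the exact pointwise likelihood ratio $f_{\theta_0}(\x)/f_{\theta_1}(\x)$, and then that the sup--inf in Equation~\ref{eq:odds_ratio_statistic} collapses to the ratio of suprema defining $\Lambda$ in Equation~\ref{eq::LRT}.

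First I would apply Bayes' rule to the generative model defining $\T_B$. Since $Y\sim\text{Ber}(p)$ independently of $\theta$, with $\X\mid\theta,Y=1\sim F_\theta$ (density $f_\theta$) and $\X\mid\theta,Y=0\sim G$ (density $g$), a one-line computation gives
\begin{equation*}
\P(Y=1\mid\theta,\x)=\frac{p\,f_\theta(\x)}{p\,f_\theta(\x)+(1-p)\,g(\x)},
\end{equation*}
together with the complementary expression for $\P(Y=0\mid\theta,\x)$. The common denominator cancels in the odds, leaving $\O(\x;\theta)=\tfrac{p}{1-p}\cdot\tfrac{f_\theta(\x)}{g(\x)}$.

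Next I would form the odds ratio. The factor $\tfrac{p}{1-p}$ and the reference density $g(\x)$ are independent of $\theta$, so they cancel in $\Or(\x;\theta_0,\theta_1)=\O(\x;\theta_0)/\O(\x;\theta_1)$, leaving exactly $f_{\theta_0}(\x)/f_{\theta_1}(\x)$. This is the crucial observation: the reference distribution $G$ drops out entirely and the odds ratio is precisely the pointwise likelihood ratio. Invoking the hypothesis $\widehat{\P}(Y=1\mid\theta,\x)=\P(Y=1\mid\theta,\x)$ then yields $\widehat{\Or}(\x;\theta_0,\theta_1)=f_{\theta_0}(\x)/f_{\theta_1}(\x)$ as well.

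Finally I would substitute into Equation~\ref{eq:odds_ratio_statistic}. Summing the logs and using $\L(\D;\theta)=\prod_i f_\theta(\X_i^{\text{obs}})$ turns the inner sum into $\log\L(\D;\theta_0)-\log\L(\D;\theta_1)$. Because the first term does not depend on $\theta_1$, the inner infimum over $\theta_1$ returns $\log\L(\D;\theta_0)-\sup_{\theta_1\in\Theta}\log\L(\D;\theta_1)$; the subtracted term is then constant in $\theta_0$, so the outer supremum gives $\sup_{\theta_0\in\Theta_0}\log\L(\D;\theta_0)-\sup_{\theta\in\Theta}\log\L(\D;\theta)$, which is exactly $\Lambda(\D;\Theta_0)$. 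There is no genuine obstacle here; the only point requiring mild care is that the monotone logarithm commutes with the suprema and that the $\theta_1$-free term can be pulled out of the infimum, which is immediate once the cancellation of $G$ is established.
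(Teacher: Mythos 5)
Your proposal is correct and follows essentially the same route as the paper's proof: apply Bayes' rule to obtain $\O(\x;\theta)=\tfrac{p}{1-p}\cdot\tfrac{f_\theta(\x)}{g(\x)}$, cancel the $\theta$-free factors in the odds ratio to get $f_{\theta_0}(\x)/f_{\theta_1}(\x)$, and substitute into the test statistic. Your explicit justification that the inner infimum and outer supremum decouple into the two suprema defining $\Lambda$ is a slightly more careful rendering of the paper's final equality, but it is the same argument.
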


{\bf Estimating the critical value.} 	A key question is how to {\em efficiently} estimate the critical value of a test. In this section we consider a single composite null hypothesis  $H_0:\theta \in \Theta_0$. (The setting for constructing confidence sets by testing (\ref{eq:test_Neyman}) for all $\theta_0 \in \Theta$ is discussed in Section~\ref{sec::confidence_sets}.). 
Suppose that we reject the null hypothesis  
if the test statistic (\ref{eq:odds_ratio_statistic})
is smaller than some constant $C$. To achieve a test with a desired level of significance  $\alpha$, we need (for maximum power) the largest  $C$
that satisfies 
\begin{equation} \label{eq:significance_level}
\sup_{\theta \in \Theta_0} \P \left(\tau(\D; \Theta_0) < C \mid \theta \right) \leq \alpha .
\end{equation}
However, we cannot explicitly compute the critical value $C$ or the rejection probability as we do not know the distribution of the test statistic $\tau$.

Simulation-based approaches are often used to compute rejection probabilities and critical values in lieu of large-sample theory approximations. Typically, such simulations compute 
a separate Monte Carlo simulation at each fixed $\theta \in \Theta_0$ on, e.g., a fine enough grid on $\theta$. That is, the convention is to rely solely on sample points generated at fixed $\theta$ to estimate the rejection probabilities $\P(\tau(\mathcal{D};\Theta_0)< C|\theta)$. Here we propose to estimate the critical values $C$ for all  $\theta \in \Theta_0$ and significance levels $\alpha \in [0,1]$ simultaneously. 
At the heart of our approach is the key observation that the rejection probability  $\P(\tau(\mathcal{D};\Theta_0)< C|\theta)$ is a conditional cumulative distribution function, which in many settings varies smoothly as a  function of $\theta$ and $C$. Thus, similar to how we estimate odds for the \texttt{ACORE} statistic, 
one can use data generated in the neighborhood of $\theta$ to improve estimates of our quantities of interest at any $\theta$. This is what a quantile regression implicitly does to estimate $C$.

Algorithm \ref{alg:estimate_thresholds2} outlines the details of the procedure for estimating $C$.  In brief, we use a training sample $\T'_{B'}=\{(\theta_i, \tau_i) \}_{i=1}^{B'}$ (independent of $\T_B$) to estimate the $\alpha$-conditional quantile  $c_\alpha(\theta)$ defined by $\P \left(\tau \leq c_\alpha(\theta) \mid \theta \right) = \alpha.$   Let $\widehat{c}_\alpha(\theta)$ be the estimate of $c_\alpha(\theta)$ from a quantile regression of  $\tau$ on $\theta$. By (\ref{eq:significance_level}), our estimate of the critical value $C$ is $
 \widehat{C} =  
 \inf_{\theta \in \Theta_0} \widehat{c}_\alpha(\theta).$
As we shall see, even if the odds are not well estimated, tests and confidence regions based on estimated odds are still valid as long as the thresholds are well estimated. Next  we show that the sample size $B'$ in Algorithm~\ref{alg:estimate_thresholds2}  controls the type I error (Theorem~\ref{thm:convergenceCutoffs}), whereas the training sample size $B$ for estimating odds  is related to the power of the test (Theorem~\ref{thm:convergenceLRT}).

\begin{algorithm}[!ht]
	\caption{
	Estimate the critical value $C$ for a level-$\alpha$ test of composite hypotheses $H_0: \theta \in \Theta_0$ vs. $H_1: \theta \in \Theta_1$
	 	\small
	 	}\label{alg:estimate_thresholds2}
	\algorithmicrequire \ {\small  stochastic forward simulator $F_\theta$; sample size $B'$ for training quantile regression estimator; $r_{\Theta_0}$ (a  fixed proposal distribution over the null region $\Theta_0$); test statistic $\tau$; quantile regression estimator; desired level $\alpha \in (0,1)$}\\
	\algorithmicensure \ {\small estimated critical value  $\widehat{C}$}
	\begin{algorithmic}[1]
	\STATE Set $\T' \gets \emptyset$
\FOR{i in \{1,\ldots,B'\}} 
\STATE  Draw parameter $\theta_i \sim r_{\Theta_0}$
\STATE Draw sample $\X_{i,1},\ldots,\X_{i,n}  \stackrel{iid}{\sim}  F_{\theta_i}$
\STATE Compute  test statistic $\tau_i \gets \tau((\X_{i,1},\ldots,\X_{i,n});\Theta_0) $
\STATE  $\T' \gets \T' \cup  \{(\theta_i,\tau_{i})\}$
\ENDFOR
\STATE  Use $\T'$ to learn parametrized function   
$\widehat{c}_\alpha(\theta) := \widehat{F}_{\tau|\theta}^{-1}(\alpha | \theta)$ 
via quantile regression of  $\tau$ on $\theta$

\textbf{return}  $ \widehat{C} \leftarrow
 \inf_{\theta \in \Theta_0} \widehat{c}_\alpha(\theta)$

	\end{algorithmic}
\end{algorithm}

{\bf Theoretical guarantees.}
We denote convergence in probability and in distribution by $ \overset{\P}{\rightarrow}$ and $\xrightarrow[]{\enskip \small \mbox{Dist} \enskip}$, respectively.
We start by showing that our procedure leads to valid hypothesis tests (that is,  tests that control the type I error probability) as long as $B'$ in Algorithm~\ref{alg:estimate_thresholds2} is large enough. In order to do so,
we assume that the quantile regression estimator used in Algorithm~\ref{alg:estimate_thresholds2} to estimate the
critical values is consistent in the following sense:
\begin{assumption}
\label{assum:quantile_consistent}
Let $\hat F_{B'}(\cdot|\theta)$ be the estimated cumulative distribution function of the test statistic $\tau$ conditional on $\theta$ based on a sample size $B'$, and
let $ F(\cdot|\theta)$ be true conditional distribution.
For every $\theta \in \Theta_0$, assume that the quantile regression estimator is such that
$$\sup_{t \in \mathbb{R}}|\hat F_{B'}(t|\theta)- F(t|\theta)|\xrightarrow[B' \longrightarrow\infty]{\enskip \P \enskip} 0$$
\end{assumption}

Under some conditions, Assumption~\ref{assum:quantile_consistent} holds for instance for quantile regression forests \citep{meinshausen2006quantile}.

Next we show that, for every fixed training sample size  $B$ in Algorithm~\ref{alg:joint_y},  
Algorithm~\ref{alg:estimate_thresholds2} yields a valid hypothesis test as $B' \rightarrow \infty$. The result holds even if the likelihood ratio statistic is not well estimated.

\begin{theorem}
 \label{thm:convergenceCutoffs}
Let 
$C_{B,B'} \in \mathbb{R}$ be the 
critical value 
of the test based on
the statistic $\tau=\tau_B$  for a training  sample size $B$ with critical value chosen according to Algorithm~\ref{alg:estimate_thresholds2}
for a fixed $\alpha \in (0,1)$. If the quantile
estimator satisfies Assumption~\ref{assum:quantile_consistent}
and $|\Theta|<\infty$, then 
$$ C_{B,B'}  \xrightarrow[B' \longrightarrow\infty]{\enskip \P \enskip} C_B^*,$$
where $C_B^*$ is such that 
$$\sup_{\theta \in \Theta_0}\P(\tau_B \leq C_B^*|\theta) = \alpha.$$
\end{theorem}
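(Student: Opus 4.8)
The plan is to reduce everything to the behavior of the estimated conditional quantiles $\widehat{c}_\alpha(\theta)$ and then to exploit the finiteness of $\Theta$. First I would observe that the target $C_B^*$ is exactly $\inf_{\theta \in \Theta_0} c_\alpha(\theta)$. Writing $F(\cdot|\theta)$ for the true CDF of $\tau_B$ given $\theta$ and $c_\alpha(\theta)=F^{-1}(\alpha|\theta)$ for the true $\alpha$-quantile, note that since $\Theta_0\subseteq\Theta$ is finite the infimum is attained at some $\theta^\star$, so $F(\inf_\theta c_\alpha(\theta)\,|\,\theta^\star)=F(c_\alpha(\theta^\star)|\theta^\star)=\alpha$, whereas for every other $\theta$ monotonicity of $F(\cdot|\theta)$ together with $\inf_\theta c_\alpha(\theta)\le c_\alpha(\theta)$ gives $F(\inf_\theta c_\alpha(\theta)\,|\,\theta)\le F(c_\alpha(\theta)|\theta)=\alpha$. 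Hence $\sup_{\theta\in\Theta_0}F(\inf_\theta c_\alpha(\theta)\,|\,\theta)=\alpha$, which is precisely the defining property of $C_B^*$, so $C_B^*=\inf_{\theta\in\Theta_0}c_\alpha(\theta)$. Since Algorithm~\ref{alg:estimate_thresholds2} sets $C_{B,B'}=\inf_{\theta\in\Theta_0}\widehat c_\alpha(\theta)$, the theorem reduces to showing $\inf_\theta\widehat c_\alpha(\theta)\xrightarrow{\P}\inf_\theta c_\alpha(\theta)$.

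Second, I would upgrade Assumption~\ref{assum:quantile_consistent} (uniform-in-$t$ consistency of the conditional CDF, for each fixed $\theta$) into consistency of the conditional quantiles. For a fixed $\theta$, the standard inversion argument shows that $\sup_t |\widehat{F}_{B'}(t|\theta) - F(t|\theta)|\xrightarrow{\P}0$ implies $\widehat c_\alpha(\theta)=\widehat F_{B'}^{-1}(\alpha|\theta)\xrightarrow{\P}c_\alpha(\theta)$, provided $F(\cdot|\theta)$ is continuous and strictly increasing in a neighborhood of $c_\alpha(\theta)$ so that the quantile is uniquely identified. Concretely, one fixes $\varepsilon>0$, uses strict monotonicity to find $\delta>0$ with $F(c_\alpha(\theta)-\varepsilon\,|\,\theta)<\alpha-\delta$ and $F(c_\alpha(\theta)+\varepsilon\,|\,\theta)>\alpha+\delta$, and then on the event $\{\sup_t|\widehat F_{B'}-F|<\delta\}$ (whose probability tends to $1$) the estimated quantile is trapped in $(c_\alpha(\theta)-\varepsilon,\,c_\alpha(\theta)+\varepsilon)$.

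Third, I would invoke $|\Theta|<\infty$ to pass from pointwise to uniform convergence over $\Theta_0$. Since $\Theta_0$ is finite, the maximum of finitely many terms each converging to $0$ in probability also converges to $0$ in probability, so $\max_{\theta\in\Theta_0}|\widehat c_\alpha(\theta)-c_\alpha(\theta)|\xrightarrow{\P}0$. Finally, the infimum is $1$-Lipschitz in the sup norm, $|\inf_\theta\widehat c_\alpha(\theta)-\inf_\theta c_\alpha(\theta)|\le\max_{\theta\in\Theta_0}|\widehat c_\alpha(\theta)-c_\alpha(\theta)|$, so combining with Step 1 gives $C_{B,B'}=\inf_\theta\widehat c_\alpha(\theta)\xrightarrow{\P}\inf_\theta c_\alpha(\theta)=C_B^*$, completing the proof.

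I expect the main obstacle to be the second step, the CDF-to-quantile inversion: the uniform-in-$t$ hypothesis does not by itself control the quantile unless $F(\cdot|\theta)$ is well-behaved (continuous and locally strictly increasing) near $c_\alpha(\theta)$, so some such regularity must be assumed or argued, since flat spots or jumps of the conditional distribution at the $\alpha$-quantile would break the clean convergence. By contrast, the finiteness of $\Theta$ does all the heavy lifting for uniformity in $\theta$, rendering the third step essentially trivial, and the identification of $C_B^*$ with $\inf_\theta c_\alpha(\theta)$ in the first step is a pure monotonicity computation.
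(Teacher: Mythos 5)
Your proof is correct and follows essentially the same route as the paper's: uniformize the conditional-CDF convergence over the finite set $\Theta_0$ (the paper does this via a union bound), invert to get uniform convergence of the conditional quantiles, and conclude via the $1$-Lipschitz property of the infimum in the sup norm, with the added bonus that you make explicit the identification $C_B^* = \inf_{\theta\in\Theta_0} c_\alpha(\theta)$, which the paper leaves implicit. The regularity caveat you raise at the inversion step is real: the paper's proof simply asserts that quantile convergence ``follows'' from uniform CDF convergence, silently assuming the continuity and local strict monotonicity of $F(\cdot\,|\,\theta)$ near $c_\alpha(\theta)$ that you correctly identify as necessary.
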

 
Finally we show that as long as the probabilistic classifier is consistent and the critical values are well estimated (which holds for large $B'$ according to Theorem~\ref{thm:convergenceCutoffs}), the power of the \texttt{ACORE} test converges to the power of the LRT as $B$ grows.
 
\begin{theorem}
\label{thm:convergenceLRT}
Let $\widehat{\phi}_{B,C_B}(\mathcal{D})$ be the test based on
the statistic $\tau=\tau_B$  for a labeled sample size $B$ with critical value $C_B \in \mathbb{R}$.\footnote{That is, $\widehat{\phi}_{B,C_B}(\mathcal{D})=1  \iff \tau_B(\D; \Theta_0) < C_B$.} Moreover, 
let 
$\phi_{C^*}(\mathcal{D})$ be the likelihood ratio test with  critical value
$C^* \in \mathbb{R}$.\footnote{That is, $\phi_{C^*}(\mathcal{D})=1  \iff \Lambda(\D; \Theta_0) < C^*$.}
If, for every
$\theta \in \Theta$,
$$\widehat{\P}(Y=1|\theta,\X) \xrightarrow[B \longrightarrow\infty]{\enskip \P \enskip} \P(Y=1|\theta,\X),$$
where $|\Theta|<\infty$, and $\widehat{C}_B$
is such that
$\widehat{C}_B
\xrightarrow[B \longrightarrow\infty]{\enskip \mbox{Dist} \enskip}C^*$,
then, for every $\theta \in \Theta$,
$$\P\left(\widehat{\phi}_{B,\widehat{C}_B}(\mathcal{D})=1|\theta \right) 
\xrightarrow[B \longrightarrow\infty]{} \P\left(\phi_{C^*}(\mathcal{D})=1|\theta\right).
$$
\end{theorem}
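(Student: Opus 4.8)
The plan is to split the claim into two convergences and then combine them with a squeeze argument on the power function. Write the \texttt{ACORE} power at $\theta$ as $\P(\widehat\phi_{B,\widehat C_B}(\D)=1\mid\theta)=\P(\tau_B(\D;\Theta_0)<\widehat C_B\mid\theta)$ and the target as $\P(\phi_{C^*}(\D)=1\mid\theta)=\P(\Lambda(\D;\Theta_0)<C^*\mid\theta)$. The two ingredients I need are (i) $\tau_B(\D;\Theta_0)\overset{\P}{\to}\Lambda(\D;\Theta_0)$ conditional on $\theta$, and (ii) $\widehat C_B\overset{\P}{\to}C^*$. Ingredient (ii) is immediate from the hypothesis, since $C^*$ is a fixed constant and convergence in distribution to a constant is equivalent to convergence in probability to it.

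For ingredient (i), I would argue by the continuous mapping theorem. The finiteness $|\Theta|<\infty$ is what makes this clean: the hypothesis $\widehat\P(Y=1\mid\theta,\X)\overset{\P}{\to}\P(Y=1\mid\theta,\X)$ holds for each of the finitely many $\theta\in\Theta$, and since the observed points $\X_i^{\mathrm{obs}}\sim F_\theta$ and $F_\theta$ is absolutely continuous with respect to the marginal of $\X$ (as $G$ has larger support than $F_\theta$), this convergence transfers to each of the $n$ observed points; a union bound over the finitely many pairs $(\theta,i)$ then gives simultaneous convergence. On the region where the probabilities lie in $(0,1)$, the maps $p\mapsto p/(1-p)$ defining the odds, the difference of logs defining $\log\widehat{\Or}$, the finite sum over $i=1,\dots,n$, and the $\sup_{\theta_0\in\Theta_0}$ and $\inf_{\theta_1\in\Theta}$ (which are maxima and minima over finite sets) are all continuous. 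Hence $\tau_B\overset{\P}{\to}\sup_{\theta_0\in\Theta_0}\inf_{\theta_1\in\Theta}\sum_{i=1}^n\log\Or(\X_i^{\mathrm{obs}};\theta_0,\theta_1)$, and by Fisher consistency (Proposition~\ref{prop::consistency}) this limit equals $\Lambda(\D;\Theta_0)$.

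With (i) and (ii) established, I would finish by squeezing. Fix $\epsilon>0$ and set $A_B=\{|\tau_B-\Lambda|\le\epsilon\}\cap\{|\widehat C_B-C^*|\le\epsilon\}$, so $\P(A_B^c\mid\theta)\to0$ by a union bound. On $A_B$ the inclusions $\{\Lambda<C^*-2\epsilon\}\subseteq\{\tau_B<\widehat C_B\}\subseteq\{\Lambda<C^*+2\epsilon\}$ hold, giving
\[ \P(\Lambda<C^*-2\epsilon\mid\theta)-\P(A_B^c\mid\theta)\ \le\ \P(\tau_B<\widehat C_B\mid\theta)\ \le\ \P(\Lambda<C^*+2\epsilon\mid\theta)+\P(A_B^c\mid\theta). \]
Sending $B\to\infty$ kills the $\P(A_B^c\mid\theta)$ terms, and then sending $\epsilon\to0$ should collapse both bounds to the common value $\P(\Lambda<C^*\mid\theta)$, as required.

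I expect the main obstacle to be precisely this last step: the lower bound tends to $\P(\Lambda<C^*\mid\theta)$ while the upper bound tends to $\P(\Lambda\le C^*\mid\theta)$, so the squeeze only closes if $\Lambda(\D;\Theta_0)$ has no atom at $C^*$, i.e.\ $\P(\Lambda(\D;\Theta_0)=C^*\mid\theta)=0$. I would therefore either invoke continuity of the conditional distribution of $\Lambda$ at $C^*$ (a mild regularity condition that holds, e.g., when $F_\theta$ is continuous) as an added assumption, or verify it for the model at hand. A secondary technical point to handle carefully is transferring the in-probability convergence of $\widehat\P$ from the training marginal of $\X$ to the sampling distribution $F_\theta$, which is exactly where absolute continuity via the larger support of $G$ is used.
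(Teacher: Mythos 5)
Your proposal follows the same decomposition as the paper's proof: first show $\tau_B$ converges to the exact log-odds-ratio statistic via the continuous mapping theorem over the finite $\Theta$ (this is the paper's Lemma~\ref{lemma:convergence_statistic}), identify that limit with $\Lambda(\D;\Theta_0)$ via Proposition~\ref{prop::consistency}, and then combine with the convergence of $\widehat{C}_B$. The only real difference is in the final step: the paper applies Slutsky's theorem to $\tau_B - \widehat{C}_B$ and then evaluates the limiting distribution function at $0$, whereas you carry out an explicit $\epsilon$-squeeze on the event $\{|\tau_B-\Lambda|\le\epsilon\}\cap\{|\widehat{C}_B-C^*|\le\epsilon\}$. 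These are two renderings of the same argument, but your version makes visible a condition the paper's proof uses silently: convergence in distribution of $\tau_B-\widehat{C}_B$ only gives $\P(\tau_B-\widehat{C}_B\le 0\mid\theta)\to\P(\Lambda-C^*\le 0\mid\theta)$ if $0$ is a continuity point of the limit law, i.e.\ if $\P(\Lambda(\D;\Theta_0)=C^*\mid\theta)=0$ --- exactly the no-atom condition you flag as the place where your squeeze fails to close. (The paper also switches between the strict inequality in the definition of the test and ``$\le$'' in the displayed limit, which is likewise innocuous only under this condition.) So your identification of that regularity requirement is not a defect of your argument but a point on which your write-up is more careful than the paper's. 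Your secondary remark about transferring the convergence of $\widehat\P$ from the training marginal to $F_\theta$ is also a legitimate technicality that the paper glosses over; neither issue changes the substance of the proof.
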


\subsection{Confidence Sets}  \label{sec::confidence_sets}
To construct a confidence set for $\theta$, we use the equivalence of tests and confidence sets (Section~\ref{sec::trad_inference}): Suppose that we for every $\theta_0 \in \Theta$ can find the critical value $C_{\theta_0}$ of  a test of (\ref{eq:test_Neyman}) with type I error no larger than $\alpha$. The random set
$$R(\D) = \left\{ \theta_0  \in \Theta \ \middle\vert  \tau(\D;\theta_0) \geq C_{\theta_0}\right\},$$
then defines a $(1-\alpha)$ confidence region for $\theta$.

However, rather than repeatedly running Algorithm~\ref{alg:estimate_thresholds2} for each null hypothesis $\Theta_0=\{\theta_0\}$ separately, we estimate all critical values $C_{\theta_0}$ (for different $\theta_0 \in \Theta$) simultaneously. Algorithm~\ref{alg:estimate_thresholds_conf2} outlines our procedure. Again, we use quantile regression to learn a parametrized function $C_{\theta_0}$. The whole procedure for computing confidence sets via \texttt{ACORE} is  summarized in  Algorithm~\ref{alg:conf_reg} in Supplementary Material~\ref{appendix::confidence_set}. Theorem~\ref{thm:convergenceCutoffs} implies that the constructed confidence set has the nominal $1-\alpha$ confidence level 
as $B' \rightarrow \infty$. The size of the confidence set depends on the training sample size $B$ and the classifier.

\begin{algorithm}[!ht]
	\caption{[Many Simple Null Hypotheses]
	Estimate the critical values $C_{\theta_0}$  for a level-$\alpha$ test of  $H_{0, \theta_0}: \theta = \theta_0$ vs. $H_{1, \theta_0}: \theta \neq \theta_0$
	for all $\theta_0 \in \Theta$ simultaneously  
	} 
	\label{alg:estimate_thresholds_conf2} 
	\algorithmicrequire \ {\small  stochastic forward simulator $F_\theta$; sample size $B'$ for training quantile regression estimator; $r$ (a fixed proposal distribution over the full parameter space $\Theta$); test statistic $\tau$; quantile regression estimator;  desired level $\alpha \in (0,1)$}\\
	\algorithmicensure \ {\small estimated critical values  $\widehat{C}_{\theta}$ for all $\theta=\theta_0 \in \Theta$}
	\begin{algorithmic}[1]
	\STATE Set $\mathcal{D'} \gets \emptyset$ 
\FOR{i in \{1,\ldots,B'\}}
\STATE Draw parameter $\theta_i \sim r$
\STATE Draw sample 
$\X_{i,1},\ldots,\X_{i,n}  \stackrel{iid}{\sim}  F_{\theta_i}$
\STATE Compute test statistic $\tau_i \gets \tau((\X_{i,1},\ldots,\X_{i,n});\theta_i) $
\STATE $\mathcal{D'} \gets \mathcal{D'} \cup  \{(\theta_i,\tau_{i})\}$  
\ENDFOR
\STATE  Use $\mathcal{D'}$ to learn parametrized function   $\widehat{C}_\theta := \widehat{F}_{\tau|\theta}^{-1}(\alpha | \theta)$
via quantile regression of  $\tau$ on $\theta$
	\textbf{return}
	 $\widehat{C}_{\theta_0} \leftarrow \widehat{F}_{\tau|\theta_0}^{-1}(\alpha | \theta_0)$
	\end{algorithmic}
\end{algorithm}

\subsection{Evaluating Empirical Coverage for All Possible Values of $\theta$} \label{sec:coverage_gof}
After the parametrized \texttt{ACORE} statistic and the critical values  have been estimated, it is important to check whether the 
resulting confidence sets  indeed are valid or, equivalently, if the resulting hypothesis tests have the nominal significance level. We also want to identify regions in parameter space where we clearly overcover. That is, the two main questions are: (i) do the constructed confidence sets satisfy
$$\P \left[\theta_0 \in R(\D) \ \middle\vert \ \theta=\theta_0\right] \geq 1-\alpha,$$
for every $\theta_0 \in \Theta$, and (ii) how close is the actual coverage to the nominal  
 confidence level 
$1-\alpha$? To answer these questions, we propose a goodness-of-fit procedure where we draw $B''$ new samples from the simulator given $\theta$,  construct a confidence set for each sample, and then check which computed regions include the ``true'' $\theta$. More specifically: we generate
a  set $\mathcal{T}_{B''}'' =\{(\theta'_1,\D'_1),\ldots,(\theta'_{B''},\D'_{B''})\}$, where $\theta_i' \sim r_{\Theta}$ and
$\D'_i$
is a sample of size $n$
of i.i.d. 
observable data from
$F_{\theta'_i}$. We then define
$$W_i:=\I\left(\theta'_i \in R(\D'_i)\right),$$
%$$W_i:=\I\left(\mu'_i \in R(\D'_i)\right),$$
where $R(\D'_i)$ is the  confidence set for $\theta$
%$\mu$ 
for data $\D'_i$.
If $R$
has the correct coverage, then
\begin{equation*}
\P(W_i = 1|\theta_i )\geq 1-\alpha.
\end{equation*}
We can estimate the probability $\P(W_i = 1|\theta_i )$ using any probabilistic classifier; some methods also provide confidence bands that assess the uncertainty in estimating this quantity \citep{eubank1993confidence,claeskens2003bootstrap,krivobokova2010simultaneous}. By comparing the estimated probability to $1-\alpha$, we have a  diagnostic tool for checking how close we are to the nominal  confidence level over the entire parameter space $\Theta$.
See Figure~\ref{fig:toy_example_coverage} for an example.

Finally note that our procedure parametrizes the coverage of the confidence set as a function of the true parameter value. This is in contrast to other goodness-of-fit techniques (e.g., \citealt{cooks2006posteriorquantile, bordoloi2010photo, talts2018validating, desc_photoz})
that only check for \emph{marginal} coverage, i.e., $n^{-1}\sum_{i=1}^{n} W_i \geq 1-\alpha$.

\section{Toy Examples}\label{sec: toy_example}

We consider two examples where the true likelihood is known. In the first example, the forward simulator $F_\theta$ follows a $\Pois(100 + \theta)$ distribution similar to the signal-background model in Section~\ref{sec: applications}. In the second example, we consider a Gaussian mixture model (GMM) with two unit-variance Gaussians centered at $-\theta$ and $\theta$, respectively. In both examples, $n=10$, the proposal distribution $r_{\Theta}$ is a uniform distribution, and the reference distribution $G$ is a normal distribution. Table~\ref{tab:toy_example_setup} summarizes the set-up. 

\begin{table}[!ht]
\resizebox{0.475\textwidth}{!}{%
\begin{tabular}{|c||c|c|}
\hline
 & \textit{Poisson Example} & \textit{GMM Example} \\ \hline
 $r_{\Theta}$ & ${\rm Unif}(0,20)$ & ${\rm Unif}(0,10)$ \\ \hline
$F_\theta$ & $\Pois(100 + \theta)$ & $\frac{1}{2}\mathcal{N}(-\theta, 1) + \frac{1}{2}\mathcal{N}(\theta, 1)$ \\ \hline
$G$ & $\mathcal{N}(110, 15^2)$ & $\mathcal{N}(0, 5^2)$ \\ \hline
True $\theta$ & $\theta_0 = 10$ & $\theta_0 = 5$\\ \hline
\end{tabular}}
\vspace*{-2.5 mm}
\caption{Set-up for the two toy examples.
}\label{tab:toy_example_setup}
\end{table}

\begin{table*}[!ht]
\resizebox{0.5\textwidth}{!}{%
\begin{tabular}{|c|c|c|c|c|}
\multicolumn{5}{c}{Poisson Example} \\ \hline
\textit{$B$} & \textit{Classifier} & \textit{Cross} & \textit{Average} & \textit{Size of} \\
 & & \textit{Entropy Loss} & \textit{Power} & \textit{Confidence Set [\%]} \\ \hline
\multirow{3}{*}{100} & MLP & 0.87 $\pm$ 0.27 & 0.24 & 75.9 $\pm$ 19.3\\
 & NN & 0.76 $\pm$ 0.15 & 0.29 & 71.6 $\pm$ 19.7\\
 & QDA & \textbf{0.66 $\pm$ 0.02} & \textbf{0.41} & \textbf{60.0 $\pm$ 15.6}\\ \hline
\multirow{3}{*}{500} & MLP & 0.69 $\pm$ 0.01 & 0.35 & \textbf{65.9 $\pm$ 20.4}\\
 & NN & 0.67 $\pm$ 0.01 & 0.38 & \textbf{62.9 $\pm$ 15.8}\\
 & QDA & \textbf{0.64 $\pm$ 0.01} & \textbf{0.47} & \textbf{54.2 $\pm$ 9.4}\\ \hline
\multirow{3}{*}{1,000} & MLP & 0.69 $\pm$ 0.01 & 0.37 & \textbf{63.3 $\pm$ 19.8}\\
 & NN & 0.66 $\pm$ 0.01 & 0.44 & \textbf{56.9 $\pm$ 15.9}\\
 & QDA & \textbf{0.64 $\pm$ 0.01} & \textbf{0.50} & \textbf{51.3 $\pm$ 7.7}\\ \hline
 \hline
- & {\color{red} \textbf{Exact}} & {\color{red} \textbf{0.64 $\pm$ 0.01}} &  {\color{red} \textbf{0.54}}  & {\color{red} \textbf{45.0 $\pm$ 4.9}} \\ \hline
\end{tabular} \\ [5ex]
}
\resizebox{0.5\textwidth}{!}{%
\begin{tabular}{|c|c|c|c|c|}
\multicolumn{5}{c}{GMM Example} \\ \hline
\textit{$B$} & \textit{Classifier} & \textit{Cross} & \textit{Average} & \textit{Size of}  \\
 & & \textit{Entropy Loss} & \textit{Power} & \textit{Confidence Set [\%]} \\ \hline
\multirow{3}{*}{100} & MLP & \textbf{0.39 $\pm$ 0.03} & \textbf{0.88} & \textbf{14.1 $\pm$ 4.7}\\
 & NN & 0.81 $\pm$ 0.31 & 0.42 & 58.4 $\pm$ 23.3\\
 & QDA & 0.64 $\pm$ 0.02 & 0.15 & 85.3 $\pm$ 21.1\\ \hline
\multirow{3}{*}{500} & MLP & \textbf{0.35 $\pm$ 0.01} & \textbf{0.90} & \textbf{12.1 $\pm$ 2.4}\\
 & NN & 0.45 $\pm$ 0.05 & 0.57 & 44.3 $\pm$ 24.1\\
 & QDA & 0.62 $\pm$ 0.01 & 0.15 & 84.9 $\pm$ 19.9\\ \hline
\multirow{3}{*}{1,000} & MLP & \textbf{0.35 $\pm$ 0.01} & \textbf{0.90} & \textbf{12.1 $\pm$ 2.5}\\
 & NN & 0.41 $\pm$ 0.02 & 0.77 & \textbf{24.9 $\pm$ 15.9}\\
 & QDA & 0.62 $\pm$ 0.01 & 0.12 & 88.1 $\pm$ 18.0\\ \hline
 \hline
- & {\color{red} \textbf{Exact}} & {\color{red} \textbf{0.35 $\pm$ 0.01}} &  {\color{red} \textbf{0.92}}  & {\color{red} \textbf{9.5 $\pm$ 2.0}} \\ \hline
\end{tabular}
}
\caption{
Results for Poisson example (left) and GMM example (right). The tables show the cross entropy loss, power (averaged over $\theta$) and size of \texttt{ACORE} confidence sets for different values of $B$ and for different classifiers. These results are based on 100 repetitions; the numbers represent the mean and one standard deviation. 
The best results in each setting are marked in bold-faced; we see that the classifier with the lowest cross entropy loss (a quantity that is easily computed in practice) is linked with the highest average power and the smallest confidence set. As $B$ increases, the best \texttt{ACORE} values approach the values for the exact LRT, listed in the bottom row in red color. (The QDA for the GMM example does not improve with increasing $B$ because  the quadratic classifier cannot separate $F_\theta$ and $G$ in a mixed distribution with three modes, hence breaking the assumption of Theorem~\ref{thm:convergenceLRT}.) All nine probabilistic classifiers yield valid 90\% confidence regions according to our diagnostics; see Table~\ref{tab:supp_mat_toy_example_coverage}.
}
\label{tab:cross_ent_poisson}
\end{table*}

First we investigate how the power of \texttt{ACORE} and the size of the derived confidence sets depend on the performance of the classifier used in the odds ratio estimation (Section~\ref{sec:hyp_test_via_or}). We consider three classifiers: multilayer perceptron (MLP), nearest neighbor (NN) and quadratic discriminant analysis (QDA).
For different values of $B$ (sample size for estimating odds ratios), we compute the binary cross entropy (a measure of classifier performance),  the power as a function of $\theta$, and the size of the constructed confidence set. Table~\ref{tab:cross_ent_poisson} summarizes results based on 100 repetitions. (To compute the critical values in Algorithm~\ref{alg:estimate_thresholds_conf2}, we use quantile gradient boosted trees and a large enough sample size 
$B'=5000$ to guarantee $90\%$ confidence sets; see Supplementary Material~\ref{sec:toy_examples_supp_mat}.)
The last row of the table shows the best attainable cross entropy loss (Supplementary Material~\ref{sec:cross_entropy_loss}), the confidence set size and power for the true likelihood function. For all 18 settings, the computation of one \texttt{ACORE} confidence set takes between 10 to 30 seconds on a single CPU.\footnote{More specifically, an 8-core Intel Xeon 3.33GHz X5680 CPU.} A full breakdown of the runtime of \texttt{ACORE} confidence sets can be found in Supplementary Material~\ref{supp_mat: computational_costs}.

For each setting with fixed $B$, the best classifier according to cross entropy loss achieves the highest power and the smallest confidence set.\footnote{In traditional settings, high power has been shown to lead to a small expected interval size under certain distributional  assumptions~\citep{pratt1961intervals,Ghosh1961intervals}.} 
Moreover, as B increases, the best values (marked in bold-faced) get closer to those of the true likelihood (marked in red). The cross-entropy loss is easy to compute in practice. Our results indicate that minimizing the cross-entropy loss is a good rule of thumb for achieving \texttt{ACORE} inference results with desirable  statistical properties.

\begin{figure}[!ht]
    \centering
    \includegraphics[width=0.5\textwidth]{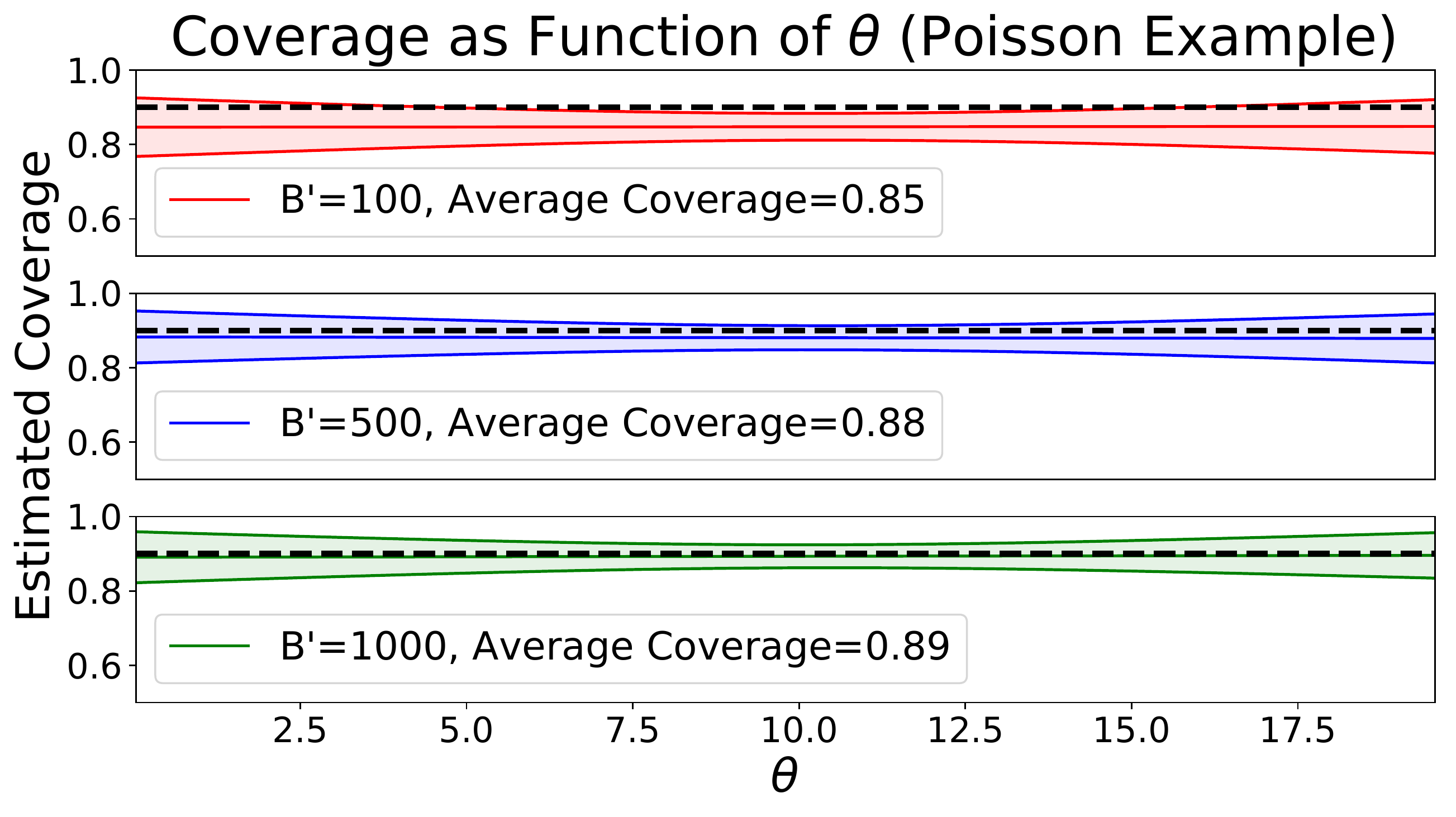}
    \vspace*{-7mm}
    \caption{Estimated coverage as a function of $\theta$ in the Poisson example for \texttt{ACORE} with different values of $B'$. The mean and one standard deviation prediction intervals are estimated via logistic regression. Our diagnostics show that $B'=500$ is large enough to achieve the nominal confidence level $1-\alpha=0.9$. (We here use $n=10$, a QDA 
    classifier with $B=1000$ and gradient boosted quantile regression).
    }
    \label{fig:toy_example_coverage}
\end{figure}

\begin{figure*}[!ht]
    \centering
    \includegraphics[width=0.33\textwidth]{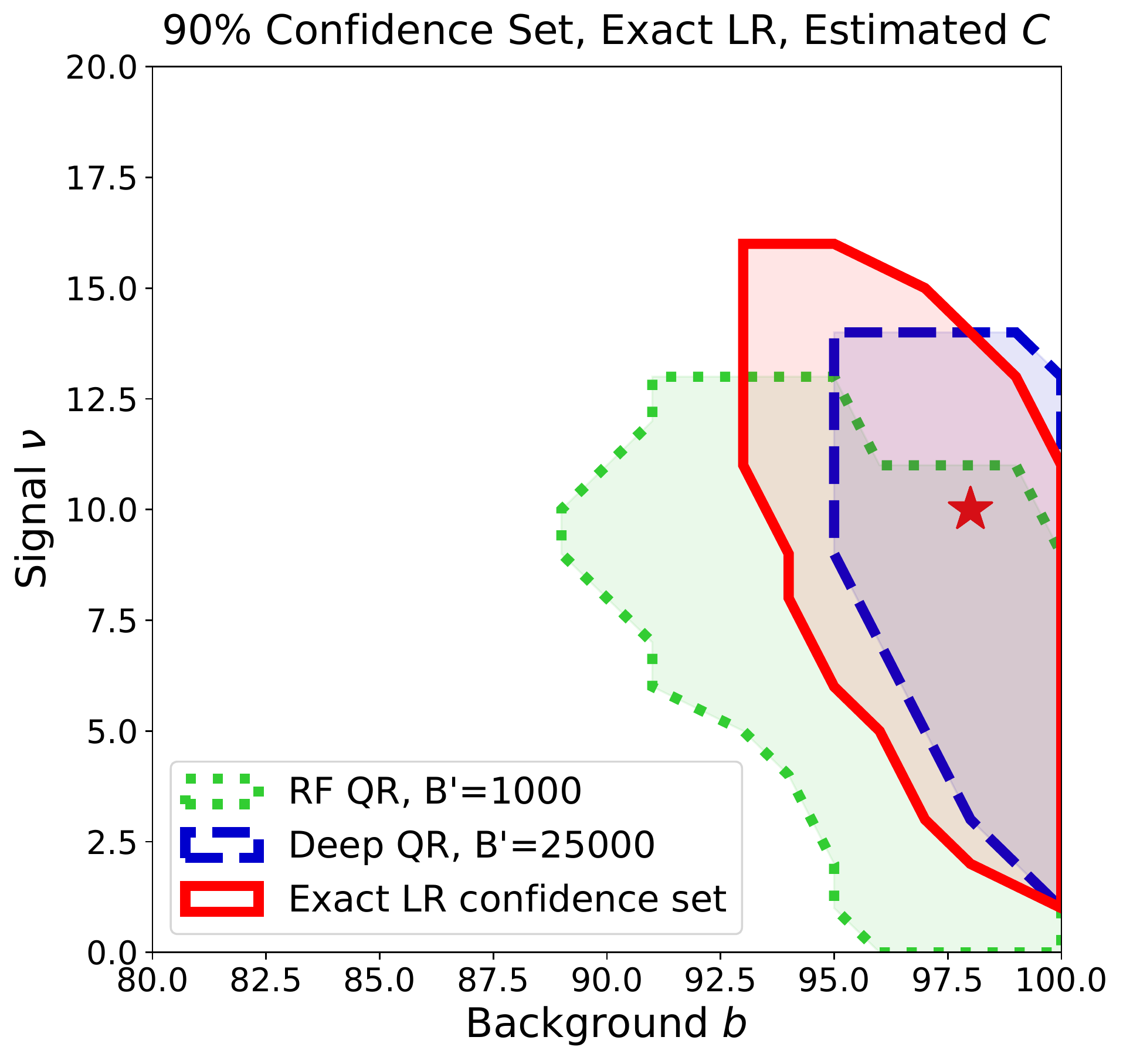} 
    \includegraphics[width=0.33\textwidth]{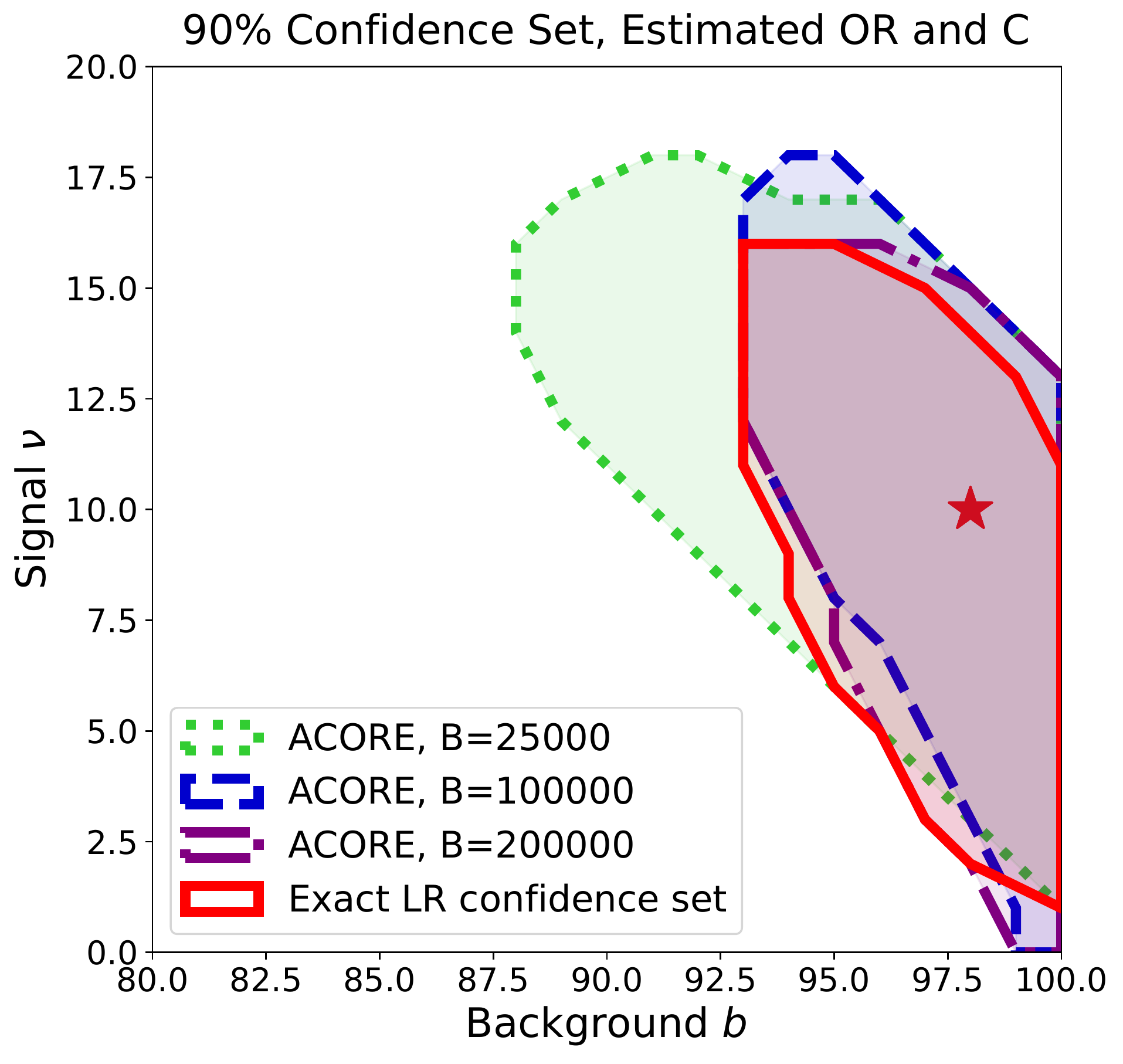} 
    \includegraphics[width=0.33\textwidth]{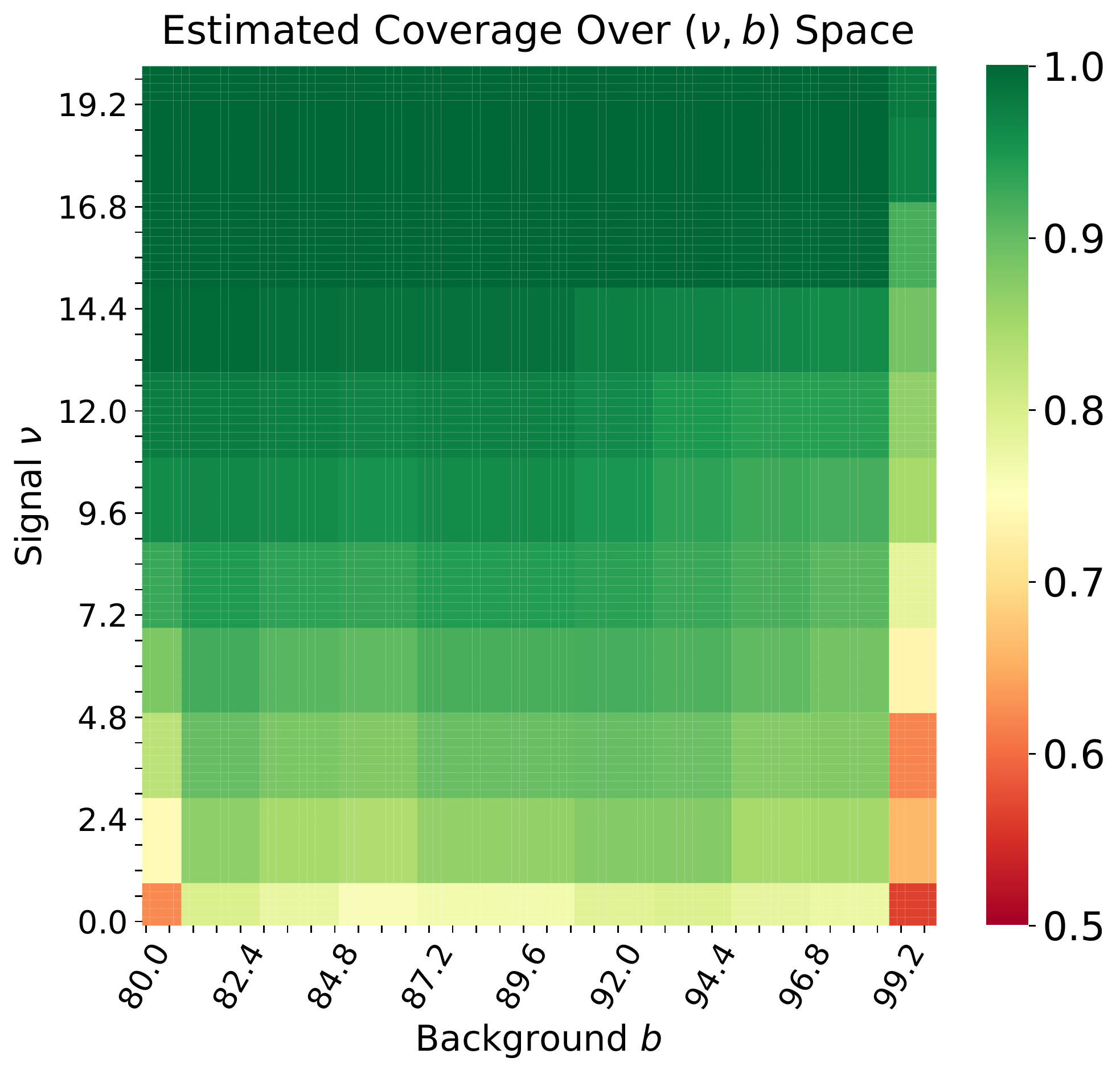} 
    \vspace*{-6mm}
    \caption{
    Signal detection HEP example.
    \textit{Left}: $90\%$ confidence sets computed with the exact likelihood ratio statistic. Estimating critical values can however be challenging, as highlighted by the differences in the results for two different quantile regression (QR) algorithms and sample sizes: Random Forest QR at $B'=1000$ (green dotted) versus Deep QR at $B'=25000$ (blue dashed). Our goodness-of-fit procedure can be used to select the best method in a principled way. (The red contour shows the exact LR confidence set, and the red star is at the true parameter setting.)
    \textit{Center}: $90\%$ confidence sets when using \texttt{ACORE} to estimate both odds ratios and critical values. This is the LFI setting.
    Our proposed strategy for choosing \texttt{ACORE} components 
    selects a 5-layer deep neural network with $B=100000$; this yields a confidence set (dashed blue) close to the exact LR set (solid red).
    Increasing $B$ does not show a noticeable improvement (dash-dotted purple), whereas decreasing $B$ makes estimates worse (dotted green).
    \textit{Right}: Heat map of the estimated coverage for a confidence set that did \emph{not} pass our goodness-of-fit diagnostic. The overall coverage of the confidence set is correct ($91.8\%$ vs. the $90\%$ nominal confidence level), but the set clearly undercovers in low-signal and high-background regions.
    } 
    \label{fig:hep_application_figure}
\end{figure*}

Next we illustrate our goodness-of-fit procedure (Section~\ref{sec:coverage_gof}) for checking the coverage of the constructed confidence sets  across the parameter space $\Theta$. To pass our goodness-of-fit test, we require the nominal coverage to be within two standard deviations of the estimated coverage for all parameter values. Figure ~\ref{fig:toy_example_coverage} shows the estimated coverage with logistic regression for the Poisson example with $B=1000$ (the training sample size for estimating odds via QDA) and three different values of $B'$ (the training sample size for estimating the critical value $C$ via gradient boosted quantile regression). As expected  (Theorem~\ref{thm:convergenceCutoffs}), the estimated coverage gets closer to the nominal $90\%$ confidence level as $B'$ increases. We can use these diagnostic plots to choose $B'$. For instance, here $B'=500$ is large enough for  \texttt{ACORE}  to achieve good coverage. (See Supplementary Material~\ref{sec:toy_examples_supp_mat} for a detailed analysis of this example.)

Supplementary Materials~\ref{supp_mat: comp_gp_mc} and \ref{supp_mat: comp_carl} include a comparison between \texttt{ACORE} and Monte Carlo Gaussian Process (MC GP) interpolation \cite{Frate2017GP_LFI} and calibrated neural nets classifiers (CARL, \citealt{Cranmer2015LikRatio}), respectively. Our results show that MC-based GP interpolation provides a better approximation of the likelihood ratio when the simulated data are approximately Gaussian (as in the Poisson example). However, when the parametric assumptions are not valid (as in the GMM example), MC-based GP fails to approximate the likelihood ratio regardless of the number of available simulations. For both examples, CARL leads to lower power and larger confidence intervals than ACORE. See Tables~\ref{tab:comparison_with_MC} and \ref{tab:comparison_with_carl} for details.

\section{Signal Detection in High Energy Physics}
\label{sec: applications}

In order to apply
\texttt{ACORE}, we need to choose four key components: (i) a probabilistic classifier, (ii) a training sample size $B$ for learning odds ratios, (iii) a
quantile regression algorithm, and (iv)
a training sample size $B^{'}$ for estimating critical values. 
We propose the following practical strategy to 
choose such components:

\begin{enumerate}
    \item Use the cross entropy loss to select the classifier and
    $B$
    (as seen in Section~\ref{sec: toy_example}, 
    a small cross entropy corresponds to higher power and a smaller confidence set);
    \vspace{-2mm}
    \item Then use our goodness-of-fit procedure (Section~\ref{sec:coverage_gof}) to select the quantile regression method and
    $B^{'}$.
\end{enumerate}

We illustrate \texttt{ACORE} and this strategy on a model described in 
\citet{Rolke2005PoissonCount} and \citet{Sen2009NuisanceParameters} 
for a high energy physics (HEP) experiment. In this model, particle collision events are counted under the presence of a background process $b$. The goal is to assess the intensity $\nu$ of a signal (i.e., an event which is not part of the background process).
The observed data $D$ consist of $n=10$ realizations of $\mathbf{X}=(N, M)$, where $N \sim \Pois(b + \nu)$ is the number of events in the signal region, and $M \sim \Pois(b)$ is  the number of events in the background (control) region.
(We use a uniform proposal distribution $r_{\Theta}$ and a Gaussian reference distribution $G$.) This model is a simplified version of a real particle physics experiment where the true likelihood function is not known.

Figure~\ref{fig:hep_application_figure} 
illustrates the role of $B$, $B'$,
and our goodness-of-fit procedure when estimating confidence sets. (For details, see Supplementary Material~\ref{appendix:hep_application}.)
In the {\em left} panel, we use the true LR statistic to show that, even if the LR is available, estimating the critical value $C$ well still matters. Our goodness-of-fit diagnostic provides a principled way of choosing the best quantile regression (QR) method and the best sample size $B'$ for estimating $C$.
In this example, random forest QR does not pass our goodness-of-fit test; it also leads to a confidence region quite different from the exact one. Deep QR, which passes our test, gives a more accurate region estimate. 
In the {\em center} panel, we use 
$\texttt{ACORE}$ to estimate both the odds ratio and the critical value $C$ (this is the LFI setting). If we choose $B$ by identifying when the cross entropy loss levels off, we would choose $B=100000$. Decreasing $B$ leads to a worse cross-entropy loss and, as the figure shows, also a larger confidence region. Increasing $B$ beyond our selected sample size does not lead to substantial gains. 
The {\em right} panel illustrates how our goodness-of-fit procedure can be used to identify regions in parameter space where a constructed confidence set is not valid. The heat map refers to an example which did \emph{not} pass our goodness-of-fit procedure. While the overall (marginal) coverage is at the right value, our diagnostic procedure (for estimating coverage as a {\em function} of $\nu$ and $b$) is able to identify undercoverage in low-signal and high-background regions. That is, for a valid confidence set, one needs to better estimate the critical value $C$ by, e.g., using a different quantile regression estimator or by increasing $B'$ (either uniformly over the parameter space or by an active learning scheme which increases the number of simulations at parameter settings where one undercovers).

\section{Conclusions} 
In this paper we introduce \texttt{ACORE}, a framework for carrying out frequentist inference in LFI settings.
\texttt{ACORE} is well suited for settings
with costly simulations, as it efficiently estimates  test statistics and critical values across the entire parameter space. We provide a new goodness-of-fit procedure for estimating coverage of constructed confidence sets for all possible parameter settings. Even if the likelihood ratio is not well estimated,
\texttt{ACORE} provides valid inference
as long as hypothesis tests and confidence sets pass our goodness-of-fit procedure (albeit at the cost of having less power and larger sets).  We provide practical guidance on how to choose the smallest number of simulations to guarantee powerful and valid procedures. 

%\remove{Future studies will investigate active learning strategies, and the role of $G$ and $r_{\Theta}$ in the performance of \texttt{ACORE}. We will also consider other strategies for choosing the number of simulations $B$. For instance, rather than relying on loss functions, such choice can be based on testing if the likelihood function is well estimated using e.g., \citet{pmlr-v108-dalmasso20a}. In addition, we will explore how \texttt{ACORE} scales with increasing (a) feature space dimension and (b) parameter space dimension. Since we leverage ML methods to efficiently estimate odds ratio and critical values (Algorithms~\ref{alg:joint_y} and \ref{alg:estimate_thresholds2}), performance in (a) will depend on the convergence rates of the chosen probabilistic classifier and quantile regression method. For (b), scaling relies on having an efficient search algorithm; this search is challenging for all likelihood-based methods. Common solutions include gradient-free optimization methods, such as Nelder-Mead \cite{Nelder1965NelderMead} and Bayesian optimization \cite{snoek2012BayesOpt}, and approximation techniques, such as profile likelihoods \cite{Murphy2000ProfileLikelihood} and hybrid resampling \cite{chaung2000hybridresampling, Sen2009NuisanceParameters}. Such approaches can potentially be integrated into \texttt{ACORE}.}

Future studies will investigate the effect of $G$
and $r_{\Theta}$ on performance, as well as how \texttt{ACORE} scales with increasing (a) feature space dimension and (b) parameter space dimension. Because we utilize ML methods to efficiently estimate odds ratio and critical values (Algorithms~\ref{alg:joint_y} and \ref{alg:estimate_thresholds2}), performance in (a) will depend on the convergence rates of the chosen probabilistic classifier and quantile regression method. For (b), scaling relies on having an efficient search algorithm; this search is challenging for all likelihood-based methods. Common solutions include gradient-free optimization methods, such as Nelder-Mead \cite{Nelder1965NelderMead} and Bayesian optimization \cite{snoek2012BayesOpt}, and approximation techniques, such as profile likelihoods \cite{Murphy2000ProfileLikelihood} and hybrid resampling \cite{chaung2000hybridresampling, Sen2009NuisanceParameters}. Such approaches can potentially be integrated into \texttt{ACORE}. The \texttt{ACORE} framework can also be adapted to accommodate test statistics such as the Bayes factor \cite{Kass1995BayesFactors}. In addition to Bayes factors, we will investigate choosing the 
number of simulations $B$ via sequential testing and likelihood goodness-of-fit tests such as \citet{pmlr-v108-dalmasso20a}. In addition, we will consider extending the \texttt{ACORE} framework to include other statistical quantities in the likelihood ratio estimation process by, for example, adapting the regression on likelihood ratio (ROLR) score in \citet{Brehmer:2018hga}.
Finally, we will
include a theoretical study of how the power of the \texttt{ACORE} test relates to classifier performance.

\section*{Acknowledgments}

We thank the anonymous reviewers for their thoughtful comments and suggestions. ND is grateful to Tudor Manole, Alan Mishler, Aleksandr Podkopaev and the STAMPS research group for insightful discussions. RI is grateful for the financial support of FAPESP (2019/11321-9) and CNPq (306943/2017-4). The early stages of this research were supported in part by the National Science Foundation under DMS-1520786.

\clearpage 

\appendix

\begin{center}
\textbf{\Large \bf Supplementary Material: Confidence Sets and Hypothesis Testing in a Likelihood-Free Inference Setting}
\end{center}

\section{Algorithm for Simulating Labeled Sample for Estimating Odds Ratios}
\label{appendix::simulation_training}

Algorithm~\ref{alg:joint_y} provides details on how to create the training sample $\T_B$ for estimating odds ratios. Out of the total training sample size $B$, a proportion $p$ is generated by the stochastic forward simulator $F_\theta$ at different parameter values $\theta$, while the remainder is sampled from a reference distribution $G$.

\begin{algorithm}[!ht]
	\caption{Generate a labeled sample of size $B$ for estimating odds ratios
	 \small }\label{alg:joint_y}
	\algorithmicrequire \ {\small   stochastic forward simulator $F_\theta$; reference distribution $G$, proposal distribution $r_{\Theta}$ over parameter space; 
	training sample size B; parameter p of Bernoulli distribution}\\
	\algorithmicensure \ {\small labeled training sample}
	\begin{algorithmic}[1]
		\STATE Draw parameter values $\theta_1, \ldots, \theta_B \stackrel{iid}{\sim} r_{\Theta}$
		\STATE Assign labels $Y_1, \ldots, Y_B \stackrel{iid}{\sim}{\sim Ber}(p)$
		\FOR{$i=1,\ldots,B$}
		\IF{$Y_i==1$}
		 \STATE  Draw sample from forward simulator, $\X_i \sim F_\theta$
		 \ENDIF
	\IF{$Y_i==0$}
		\STATE Draw sample from reference distribution, $\X_i \sim G$
		 \ENDIF
	\ENDFOR
		\STATE \textbf{return} $\T_B=\{\theta_i,\X_i,Y_i\}_{i=1}^B$  
	\end{algorithmic}
\end{algorithm}

\section{Algorithm for Constructing Confidence Set for $\theta$}\label{appendix::confidence_set}

Algorithm~\ref{alg:conf_reg}  summarizes the \texttt{ACORE} procedure for constructing confidence sets. First, we estimate parametrized odds to compute the \texttt{ACORE} test statistic $\tau$ (Eq.~\ref{eq:odds_ratio_statistic}). Then, we compute a parametrized estimate of the critical values as  a conditional distribution function of $\tau$. Finally, we compute a confidence set for $\theta$ by the Neyman inversion technique \citep{neyman1937inversion}.

 \begin{algorithm}[!ht]
	\caption{
	Construct confidence set for $\theta$ with  coefficient $\gamma=1-\alpha$
	\small}\label{alg:conf_reg}
	\algorithmicrequire \ {\small  stochastic forward simulator $F_\theta$;  reference distribution $G$; proposal distribution r over $\Theta$;  parameter p of Bernoulli distribution; sample size $B$ (for estimating odds ratios); sample size $B'$ (for estimating critical values); probabilistic classifier; observed data
	$D=\left\{\x_1^{\text{obs}},\ldots,\x_n^{\text{obs}}\right\}$; $\alpha \in (0,1)$
	}\\
	\algorithmicensure \ {\small $\theta$-values in confidence set}
	\begin{algorithmic}[1]
\STATE \codecomment{Estimate odds ratios:}
		\STATE Generate labeled sample $\mathcal{T}_B$ according to Algorithm~\ref{alg:joint_y}
		\STATE Apply probabilistic classifier to $\mathcal{T}_B$ to learn class posterior probabilities,  $\widehat{\P}(Y=1|\theta,\X),$ for all $\theta \in \Theta$ and $\X \in \mathcal{X}$	
    \STATE Let the estimated odds $\widehat{\O}(\theta,\X) = \frac{\widehat{\P}(Y=1|\theta,\X)}{\widehat{\P}(Y=0|\theta,\X)}$
        \STATE Let the estimated odds ratios 
    $\widehat{\Or}(\X;\theta_0,\theta_1) = \frac{\widehat{\O}(\theta_0,\X)}{\widehat{\O}(\theta_1,\X)},$ 
    for all $\theta_1, \theta_2 \in \Theta$ and $\X \in \mathcal{X}$
\STATE  \codecomment{Estimate the critical value for a test $\delta_{\theta_0}$ that rejects $\theta=\theta_0$ at significance level $\alpha$:}
    \STATE Construct parametrized function  $\widehat{C}_{\theta_0} := \widehat{F}_{\tau|\theta_0}^{-1}(\alpha | \theta_0)$
     for $\theta_0 \in \Theta$ according to Algorithm~\ref{alg:estimate_thresholds_conf2}
\STATE  \codecomment{Find parameter set for which the test  $\delta_{\theta_0}$ does not reject $\theta=\theta_0$:}
    \STATE \texttt{ThetaGrid} $\leftarrow$ grid of parameter values in $\Theta$
    \STATE $n_{\text{grid}} \leftarrow $ length(\texttt{ThetaGrid})
    \STATE Set $S \gets \emptyset$
    \FOR{$\textrm{\texttt{Theta}0} \in  \texttt{ThetaGrid}$} 
 	      		\STATE  $ \textrm{\texttt{Cutoff}} \gets \widehat{F}_{\tau | \theta_0}^{-1}\left(\alpha \ \middle\vert \  \textrm{\texttt{Theta}0}\right)$ 
		\STATE $\textrm{\texttt{sumLogOR}} \gets$ array with length  $n_{\text{grid}}$
         	\FOR{$j=1, \ldots, n_{\text{grid}}$} 
       \STATE $ \textrm{\texttt{Theta}1} \gets \textrm{\texttt{ThetaGrid}}[j]$ 
        \STATE  $ \textrm{\texttt{sumLogOR}}[j] \gets  \sum_{k=1}^n  \log \left( \widehat{\Or}(\x_k^{\text{obs}};  \textrm{\texttt{Theta}0},  \textrm{\texttt{Theta}1})\right)$ 
        \ENDFOR
      \STATE  $ \textrm{\texttt{TauObs}} \gets \min (  \textrm{\texttt{sumLogOR}}))$ 
              \IF{ $ \textrm{\texttt{TauObs} }>  \textrm{\texttt{Cutoff}}$ }
             \STATE  $S \gets  S  \cup   \textrm{\texttt{Theta}}0$ 
             \ENDIF
  \ENDFOR
		\STATE \textbf{return} S
	\end{algorithmic}
\end{algorithm}

\section{Proofs} \label{appendix:proofs}

\begin{proof}[Proof of Proposition \ref{prop::consistency}]
If $\ \widehat{\P}(Y=1|\theta,\x)=\P(Y=1|\theta,\x)$, then $\widehat{\Or}(\x;\theta_0,\theta_1) = \Or(\x;\theta_0,\theta_1)$. By Bayes rule and construction
(Algorithm \ref{alg:joint_y}), 
$${\O}(\x;\theta):=\frac{\P(Y=1|\theta,\x)}{\P(Y=0|\theta,\x)}=\frac{f(\x|\theta)p}{g(\x)(1-p)}.$$
 Thus,  the odds ratio at $\theta_0,\theta_1 \in \Theta$ is given by
$$ \Or(\x;\theta_0,\theta_1)= \frac{f(\x|\theta_0)}{f(\x|\theta_1)},$$ 
and therefore 
\begin{align*}
\tau(D;\Theta_0)=&\sup_{\theta_0 \in \Theta_0} \ \inf_{\theta_1 \in \Theta} \sum_{i=1}^n \left(  \log \widehat{\Or}(\x_i^{\text{obs}};\theta_0,\theta_1)\right)\\
&=\sup_{\theta_0 \in \Theta_0} \ \inf_{\theta_1 \in \Theta} \sum_{i=1}^n  \log \frac{f(\x_i^{\text{obs}}|\theta_0)}{f(\x_i^{\text{obs}}|\theta_1)} 
\\
&=\sup_{\theta_0 \in \Theta_0} \ \inf_{\theta_1 \in \Theta}
\log \left( \frac{\L(D;\theta_0)}{\L(D;\theta_1)}\right)\\
&=\Lambda(D;\Theta_0).
\end{align*}
\end{proof}

 \begin{proof}[Proof of Theorem~\ref{thm:convergenceCutoffs}]
 
 The union bound and Assumption~\ref{assum:quantile_consistent} imply that
 $$\sup_{\theta \in \Theta_0} \sup_{t \in \mathbb{R}}|\hat F_{B'}(t|\theta)- F(t|\theta)|\xrightarrow[B' \longrightarrow\infty]{\enskip \P \enskip} 0.$$
 It follows that
 $$\sup_{\theta \in \Theta_0} |\hat F^{-1}_{B'}(\alpha|\theta)- F^{-1}(\alpha|\theta)|\xrightarrow[B' \longrightarrow\infty]{\enskip \P \enskip} 0.$$
 The result follows from the fact that
 \begin{align*}
 0 \leq    |C_{B,B'}-C_B^*|&= |\sup_{\theta \in \Theta_0} \hat F^{-1}_{B'}(\alpha|\theta)-\sup_{\theta \in \Theta_0}  F^{-1}(\alpha|\theta)| \\
     &\leq \sup_{\theta \in \Theta_0} |\hat F^{-1}_{B'}(\alpha|\theta)- F^{-1}(\alpha|\theta)|,
 \end{align*}
 and thus 
 $$|C_{B,B'}-C_B^*| \xrightarrow[B' \longrightarrow\infty]{\enskip \P \enskip} 0.$$
 
 \end{proof}

\begin{lemma}
 \label{lemma:convergence_statistic}
 If
 $(\widehat{\P}(Y=1|\theta,\X))_{\theta \in \Theta} \xrightarrow[B \longrightarrow\infty]{\enskip \P \enskip} (\P(Y=1|\theta,\X))_{\theta \in \Theta}$
 and $|\Theta|<\infty$, then
 $$\tau(\D; \Theta_0) \xrightarrow[B \longrightarrow\infty]{\enskip \P \enskip}
 \sup_{\theta_0 \in \Theta_0}\inf_{\theta_1 \in \Theta} \sum_{i=1}^n  \log \left( \Or(\X_i^{\text{obs}};\theta_0,\theta_1)\right)$$
 \end{lemma}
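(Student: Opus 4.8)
The plan is to recognize $\tau(\D;\Theta_0)$ as the image of the finite collection of estimated posterior probabilities under a fixed continuous map, and then to apply the continuous mapping theorem. The crucial simplification is that $|\Theta|<\infty$: the outer $\sup_{\theta_0\in\Theta_0}$ and inner $\inf_{\theta_1\in\Theta}$ in (\ref{eq:odds_ratio_statistic}) are a maximum and a minimum over finite index sets, so the entire statistic is a deterministic function of the finite array $\{\widehat{\P}(Y=1\mid\theta,\X_i^{\text{obs}})\}_{\theta\in\Theta,\,1\le i\le n}$.

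First I would record that the hypothesis already supplies the needed joint convergence: the assumed convergence is of the whole $\theta$-indexed collection, and since there are only finitely many observed points $\X_1^{\text{obs}},\dots,\X_n^{\text{obs}}$, evaluating at them preserves joint convergence in probability of the finite array to its population counterpart $\{\P(Y=1\mid\theta,\X_i^{\text{obs}})\}$. Next I would build the deterministic map $g$ as a composition of continuous pieces: the odds transform $p\mapsto p/(1-p)$ producing $\widehat{\O}(\theta,\cdot)$; the ratio $\widehat{\O}(\theta_0,\cdot)/\widehat{\O}(\theta_1,\cdot)$ producing $\widehat{\Or}$; the logarithm; the sum over $i=1,\dots,n$; and finally $\max_{\theta_0\in\Theta_0}\min_{\theta_1\in\Theta}$. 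Since a composition of functions continuous at the relevant point is continuous there, the continuous mapping theorem gives
$$\tau(\D;\Theta_0)=g\!\left(\{\widehat{\P}(Y=1\mid\theta,\X_i^{\text{obs}})\}\right)\xrightarrow[B \longrightarrow\infty]{\enskip \P \enskip}g\!\left(\{\P(Y=1\mid\theta,\X_i^{\text{obs}})\}\right),$$
and the right-hand side is exactly $\sup_{\theta_0\in\Theta_0}\inf_{\theta_1\in\Theta}\sum_{i=1}^n\log\Or(\X_i^{\text{obs}};\theta_0,\theta_1)$.

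The hard part will be checking that $g$ is continuous at the population array, i.e., that this array lies in the continuity set of each component map. The only genuine concern is the boundary behavior of $p\mapsto p/(1-p)$ and of the logarithm: I would argue that because $G$ dominates every $F_\theta$, the reference density is positive so that $\P(Y=0\mid\theta,\x)>0$, and hence the true odds and odds ratio are strictly positive and finite at the observed points; this places the population array in the interior where the odds transform, ratio, and logarithm are all continuous. The $\max$ and $\min$ over the finite sets $\Theta_0$ and $\Theta$ are globally continuous and therefore contribute no further restriction, completing the argument.
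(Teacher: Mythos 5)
Your proposal is correct and follows essentially the same route as the paper: the paper first notes that each sum $\sum_{i=1}^n \log \widehat{\Or}(\X_i^{\text{obs}};\theta_0,\theta_1)$ converges in probability to its population counterpart and then applies the continuous mapping theorem to the finite $\max$--$\min$, which is just your single composed map $g$ split into two steps. Your explicit verification that the population array lies in the continuity set of the odds/log transforms is a point the paper leaves implicit, so it is a welcome (not divergent) addition.
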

 
  \begin{proof}
 For every $\theta_0,\theta_1 \in \Theta$,
 it follows directly from the properties of convergence in probability that
 \begin{align*}
    \sum_{i=1}^n  \log & \left( \widehat{\Or}(\X_i^{\text{obs}};\theta_0,\theta_1)\right) \\ & \xrightarrow[B \longrightarrow\infty]{\enskip \P \enskip}
 \sum_{i=1}^n  \log \left( \Or(\X_i^{\text{obs}};\theta_0,\theta_1)\right) 
 \end{align*}

 The conclusion of the lemma follows from the continuous mapping theorem.
 \end{proof}

 \begin{proof}[Proof of Theorem~\ref{thm:convergenceLRT}]
 Lemma~\ref{lemma:convergence_statistic}
 implies that 
 $\tau_B(\D; \Theta_0)$ converges in distribution to 
 $\sup_{\theta_0 \in \Theta_0}\inf_{\theta_1 \in \Theta}\sum_{i=1}^n  \log \left( \Or(\X_i^{\text{obs}};\theta_0,\theta_1)\right)$. 
 Now, from Slutsky's theorem,
 \begin{align*}
    \tau_B&(\D; \Theta_0)-\widehat{C}_B \\
 &\xrightarrow[B \longrightarrow\infty]{\enskip \mbox{Dist} \enskip}\sup_{\theta_0 \in \Theta_0}\inf_{\theta_1 \in \Theta}
 \sum_{i=1}^n  \log \left( \Or(\X_i^{\text{obs}};\theta_0,\theta_1)\right)-C^*. 
 \end{align*} 
It follows that
 \begin{align*}
     &\P\left(\widehat{\phi}_{B,\widehat{C}_B}(\mathcal{D})=1|\theta \right)=
     \P\left(\tau_B(\D; \Theta_0)-\widehat{C}_B\leq 0|\theta \right) \\
     &\xrightarrow[B \longrightarrow\infty]{}
     \P\Big(\sup_{\theta_0 \in \Theta_0}\inf_{\theta_1 \in \Theta} \sum_{i=1}^n  \log \left( \Or(\X_i^{\text{obs}};\theta_0,\theta_1)\right)\\
     &\hspace{3.5cm}  -C^* \leq 0|\theta \Big)\\
     &=\P\left(\phi_{C^*}(\mathcal{D})=1|\theta\right),
 \end{align*}
 where the last equality follows from Proposition~\ref{prop::consistency}.
 
 \end{proof}
 
\section{Toy Examples}\label{sec:toy_examples_supp_mat}

This section provides details on the toy examples of Section~\ref{sec: toy_example}.
We use the \texttt{sklearn} ecosystem \cite{scikit-learn} implementation of the following probabilistic classifiers:
\vspace{-3mm}
\begin{itemize}
    \item multi-layer perceptron (MLP) with default parameters, but no $L^2$ regularization ($\alpha =0$);
    \vspace{-2mm}
    \item quadratic discriminant analysis (QDA) with default parameters;
    \vspace{-2mm}
    \item nearest neighbors (NN) classifier, with number of neighbors equal to the rounded square root of the number of data points available (as per \citet{Duda01PatternClassification}).
\end{itemize}
\vspace{-3mm}

Table~\ref{tab:supp_mat_toy_example_coverage} reports the observed coverage for the settings of Tables~\ref{tab:toy_example_setup} and \ref{tab:cross_ent_poisson}. Critical values or $C$ are estimated with quantile gradient boosted trees ($100$ trees with maximum depth equal to $3$), a training sample size $B^{'}=5000$,  observed data $D$ of sample size $n=10$, nominal coverage of $90\%$, and averaging over $100$ repetitions. The table shows that we for all cases achieve results in line with the nominal 
confidence level.\footnote{The $95\%$ CI of a binomial distribution with probability $p=0.9$ over $100$ repetitions is in fact $[0.84, 0.95]$. This interval includes the observed coverages listed in Table~\ref{tab:supp_mat_toy_example_coverage}.} 

\begin{table}[!ht]
\centering
\begin{tabular}{|c|c|c|c|}
\hline
 $B$ & \textit{Classifier} & \textit{Poisson Example}  &\textit{GMM Example} \\ 
  & & \textit{Coverage}  &\textit{Coverage} \\ \hline
\multirow{3}{*}{100} & MLP & 0.91 & 0.87\\
 & NN & 0.91 & 0.91\\
 & QDA & 0.90 & 0.88\\ \hline
\multirow{3}{*}{500} & MLP & 0.91 & 0.91\\
 & NN & 0.93 & 0.95\\
 & QDA & 0.94 & 0.92\\ \hline
\multirow{3}{*}{1000} & MLP & 0.91 & 0.92\\
 & NN & 0.89 & 0.88\\
 & QDA & 0.91 & 0.93\\ \hline
\end{tabular}
\caption{Observed coverage of the toy examples in Tables ~\ref{tab:toy_example_setup} and \ref{tab:cross_ent_poisson}. These values are consistent with what we would expect for 100 trials with a nominal confidence level of  $90\%$; see text.}\label{tab:supp_mat_toy_example_coverage}
\end{table}

Our goodness-of-fit procedure shown in Figure~\ref{fig:toy_example_coverage} uses a set $\mathcal{T}^{''}_{B^{''}}$ with size $B^{''}=250$ (as defined in Section~\ref{sec:coverage_gof}); Figure~\ref{fig:supp_mat_est_coverage_gmm} shows the goodness-of-fit plot for the Gaussian mixture model example, where the coverage is estimated via logistic regression and the critical values are estimated via quantile gradient boosted trees. For the Poisson example a training sample size of $B'=500$ seems to be enough to achieve correct coverage, whereas the Gaussian mixture model example requires $B'=1000$. 

Next we compare  our goodness-of-fit diagnostic with diagnostics obtained via standard Monte Carlo sampling. Figure~\ref{fig:supp_mat_obs_coverage} shows the MC coverage as a function of $\theta$ for the Poisson example (left) and the Gaussian mixture model example (right). In both cases 100 MC samples are drawn at 100 parameter values chosen uniformly. The empirical \texttt{ACORE} coverage is computed over the MC samples at each chosen $\theta$. This MC procedure is expensive: it uses a total of $10,000$ simulations, which is $40$ times the number used in our goodness-of-fit procedure. The observed coverage of the Poisson example (Figure~\ref{fig:supp_mat_obs_coverage}, left)  indicates that $B'=500$ is sufficient to achieve the nominal coverage of $90\%$. For the Gaussian mixture model example (Figure~\ref{fig:supp_mat_obs_coverage}, right), we detect undercoverage for very small values of $\theta$. This discrepancy is due to the fact that, at $\theta=0$, the mixture collapses into a single Gaussian, structurally different from the GMM at any other $\theta > 0$ and closer to the $\mathcal{N}(0, 5^2)$ reference distribution. 

Our goodness-of-fit procedure is able to identify that the actual coverage is far from the nominal coverage at small values of $\theta$, when the training sample size $B'$ for estimating $C$ is too small. More specifically, Figure~\ref{fig:supp_mat_est_coverage_gmm} shows a noticeable tilt in the prediction bands for $B'=100$ and $500$. However, as $B'$ increases, the estimation of critical values becomes more precise and the estimated confidence intervals pass our goodness-of-fit diagnostic at, for example, $B'=1000$. Future studies will provide a more detailed account on how such boundary effects depend on the method for estimating the coverage.

\begin{figure}[!ht]
    \centering
    \includegraphics[width=0.495\textwidth]{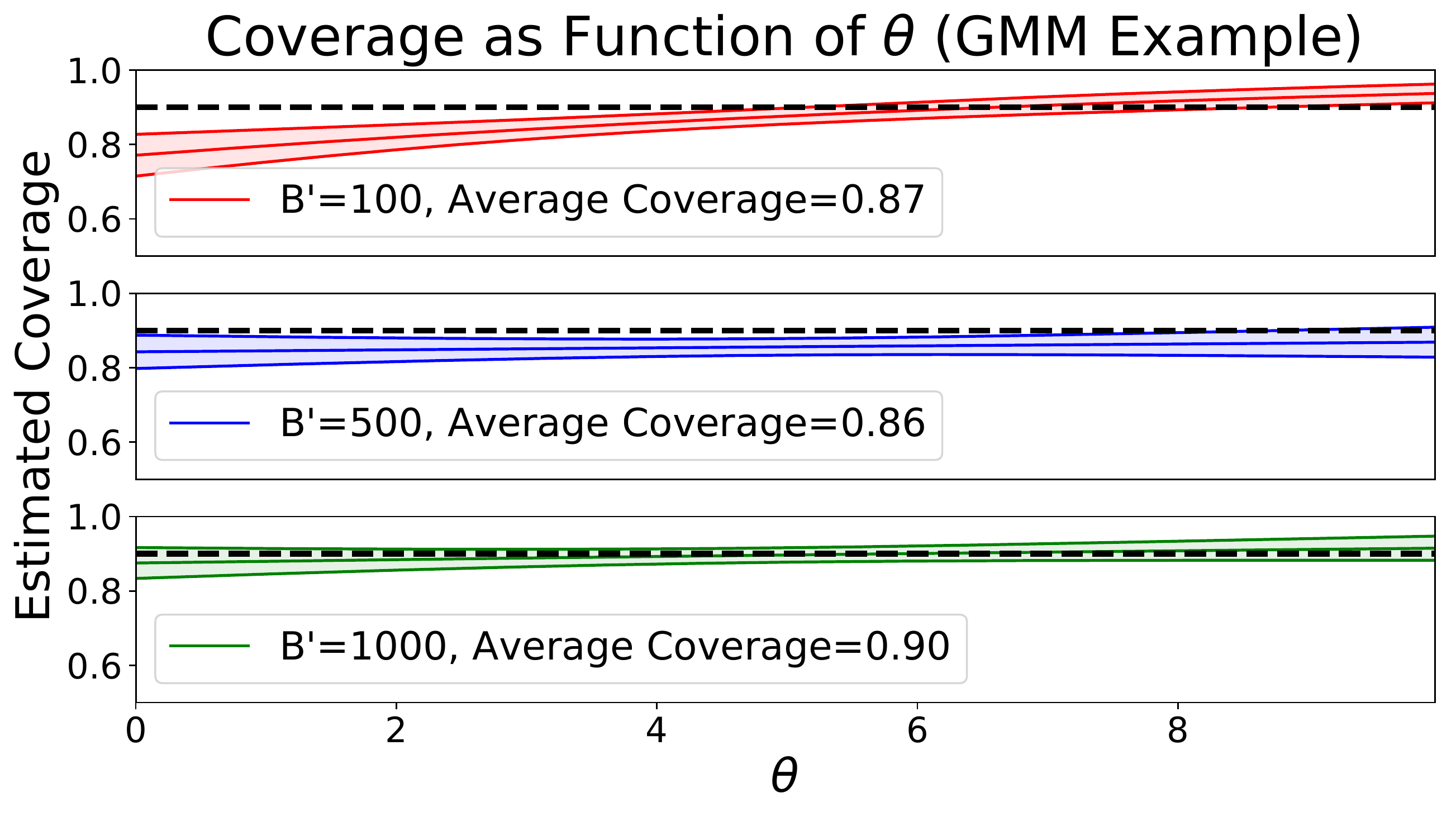}
    \vspace*{-9mm}
    \caption{Estimated coverage as a function of $\theta$ in the Gaussian mixture model example for \texttt{ACORE} with different values of $B'$. Logistic regression is used to estimate mean coverage and one standard deviation prediction bands. (We here use $n=10$, a MLP 
    classifier with $B=1000$ and quantile gradient boosted trees).}
    \label{fig:supp_mat_est_coverage_gmm}
\end{figure}

\begin{figure*}[!ht]
    \centering
    \includegraphics[width=0.495\textwidth]{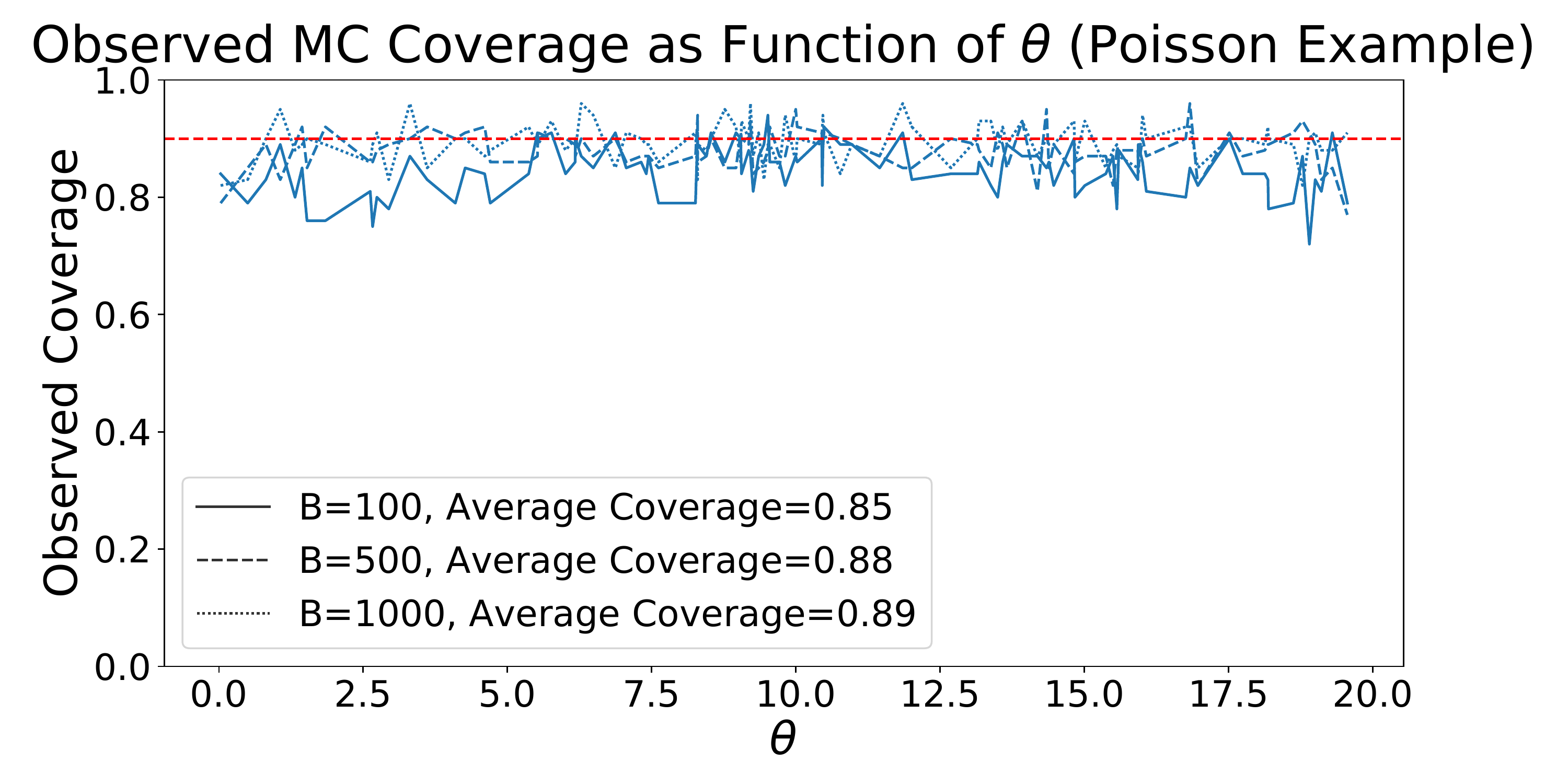}
    \includegraphics[width=0.495\textwidth]{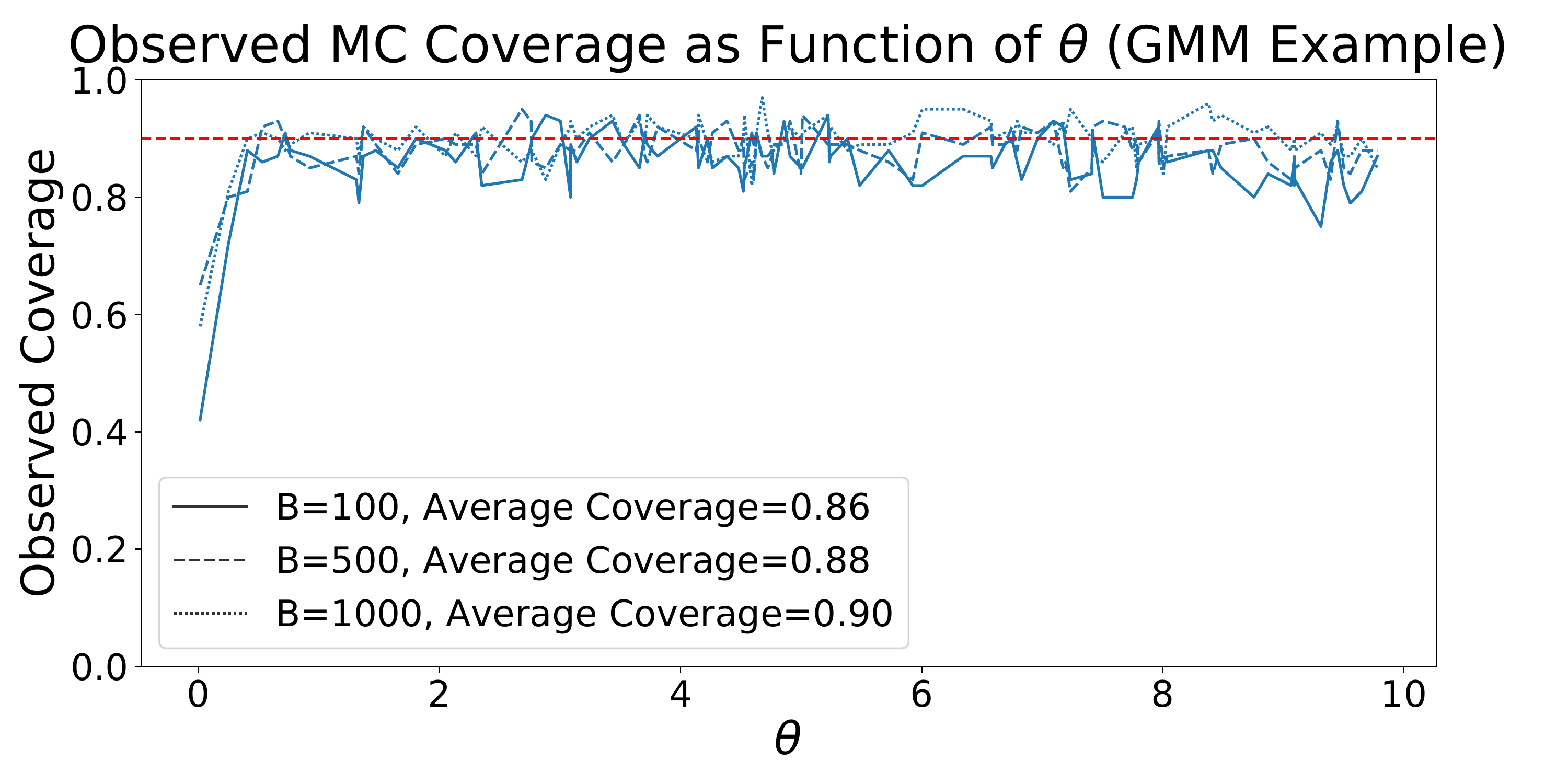}
    \vspace*{-5mm}
    \caption{Observed \texttt{ACORE} coverage across the parameter space for the Poisson example (left) and the Gaussian mixture model example (right). The coverage is computed with Monte Carlo samples of size $100$, each sampled at a $\theta$ chosen uniformly over the parameter space. Odds ratios are computed with a QDA classifier for the Poisson example, and an MLP classifier for the GMM example (as in Figures~\ref{fig:toy_example_coverage} and \ref{fig:supp_mat_est_coverage_gmm}). We observe undercoverage at small $\theta$ for the GMM (right) due to the mixture collapsing into a single Gaussian as $\theta \to 0$.}
    \label{fig:supp_mat_obs_coverage}
\end{figure*}

\section{Signal Detection in High Energy Physics}\label{appendix:hep_application}

Here we consider the signal detection example in Section~\ref{sec: applications} We describe the details of the construction of \texttt{ACORE} confidence sets which
%\annlee{6/28 I usually write in present tense (because I find it confusing to switch tenses). It's ok if you use past tense in the appendix but you then need to be consistent throughout the appendix, and pay attention to details (such as what's in the figure should be in  present tense, but procedures you've done in past tense;  some people consider results to still be valid and hence in present tense; so you may want to look into these conventions).}
used the strategy in Section~\ref{sec: applications} to choose \texttt{ACORE} components and parameters. For learning the odds ratio, we compared the following classifiers:
\vspace{-3mm}
\begin{itemize}
    \item logistic regression,
    \vspace{-2mm}
    \item quadratic discimininant analysis (QDA) classifier,
    \vspace{-2mm}
    \item nearest neighbor classifier,
    \vspace{-2mm}
    \item gradient boosted trees using $\{100, 500, 1000\}$ trees with maximum depth $\{3,5,10\}$,
    \vspace{-2mm}
    \item Gaussian process classifiers\footnote{GP classifiers were used only with sample sizes $B$ below $10,000$, as the matrix inversion quickly becomes computationally infeasible for larger values of $B$.} with radial basis functions kernels with variance $\{1,.5,.1\}$,
    \vspace{-2mm}
    \item feed-forward deep neural networks, with ${2,...,6}$ deep layers, number of neurons between $2^{\{4,...,10\}}$ and either ReLu or hyperbolic tangent activations.
\end{itemize}

For  estimating the critical values, we considered the following quantile regression algorithms:
\vspace{-2mm}
\begin{itemize}
    \item gradient boosted trees using $\{100, 250, 500\}$ trees with maximum depth $\{3,5,10\}$,
    \vspace{-2mm}
    \item random forest quantile regression with $\{100, 250, 500\}$ trees,
    \vspace{-2mm}
    \item deep quantile regression with $\{2,3\}$ deep layers, $2^{\{4,..,6\}}$ neurons and ReLu activations (using the \texttt{PyTorch} implementation \cite{pytorch2019}).
\end{itemize}

All computations were performed on 8-Core Intel Xeon CPUs X5680 at 3.33GHz. 
%Any other probabilistic classifier/quantile regression algorithm could be also easily included, as well as hyper-parameter tuning strategies such as random search \cite{Bergstra2012RandomSearch} or Bayesian optimization \cite{snoek2012Bayesopt}.

Figure~\ref{fig:supp_mat_hep_example} illustrates the two steps in identifying the four components of \texttt{ACORE}.  We first  use a validation set of $5,000$ simulations to determine which probabilistic classifier and training sample size $B$ minimize the cross entropy loss. Figure~\ref{fig:supp_mat_hep_example} (left) shows the cross entropy loss of the best four classifiers as function of $B$. The minimum is achieved by a 5-layer deep neural network (DNN) at $B=100,000$ with a cross entropy loss of $58.509\times 10^{-2}$, closely followed by QDA with $58.512\times 10^{-2}$ at $B=50,000$. Given how similar the loss values are, we select both classifiers to follow-up on.
%\footnote{\rafael{I think this footnote can be removed}For comparison, the third and fourth had minimum cross entropy loss of $58.583\times 10^{-2}$ (3-layer DNN) and $58.587\times 10^{-2}$ (gradient boosted trees).}
%\rafael{this sentence is not very clear}
%Figure~\ref{fig:supp_mat_hep_example} (right) shows the estimated coverage across the parameter space as the proportion of the parameter values which pass our diagnostic procedure\remove{ (calculated over a uniform grid)} \annlee{Do you mean the following?}
In Figure~\ref{fig:supp_mat_hep_example} (right), the ``estimated correct coverage'' represents the proportion of the parameter space that passes our diagnostic procedure.
The lowest $B'$ with correct coverage is achieved by the five-layer DNN classifier (for estimating odds ratios) at $B'=25,000$ with critical values estimated via a two-layer deep quantile regression algorithm. None of the quantile regression algorithms pass a diagnostic test with a nominal coverage of $90\%$ at the one standard deviation level when using the QDA classifier. We therefore do not use QDA in Section~\ref{sec: applications}.

Based on the analysis above, we choose the following \texttt{ACORE} components: (i) a five-layer DNN for learning odds ratios, (ii) $B=100,000$, (iii) a two-layer deep quantile regression for estimating critical values, and (iv) $B'=25,000$. Figure~\ref{fig:hep_application_figure} shows the confidence sets computed with this choice. 
% \rafael{I think we can remove this since we already mention it in the paper:}
% Figure~\ref{fig:supp_mat_hep_example} (left) shows that decreasing $B$ for the 5-layer DNN classifier is detrimental to the cross entropy, but increasing $B$ above $100,000$ is not beneficial: This is mirrored in the \texttt{ACORE} confidence intervals shown in Figure~\ref{fig:hep_application_figure} (center).

\begin{figure*}[!ht]
    \centering
     \includegraphics[width=0.495\textwidth]{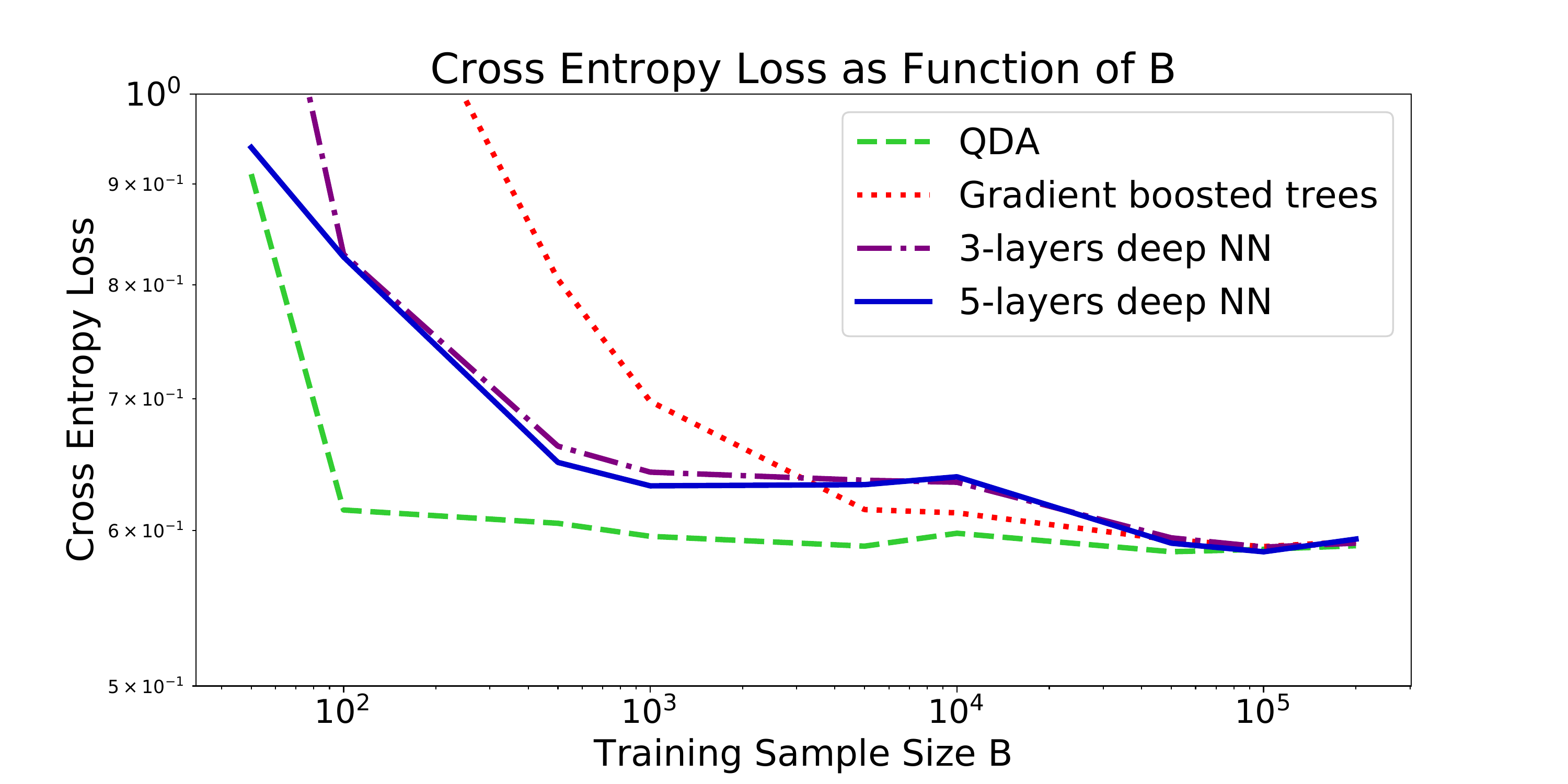}
    \includegraphics[width=0.495\textwidth]{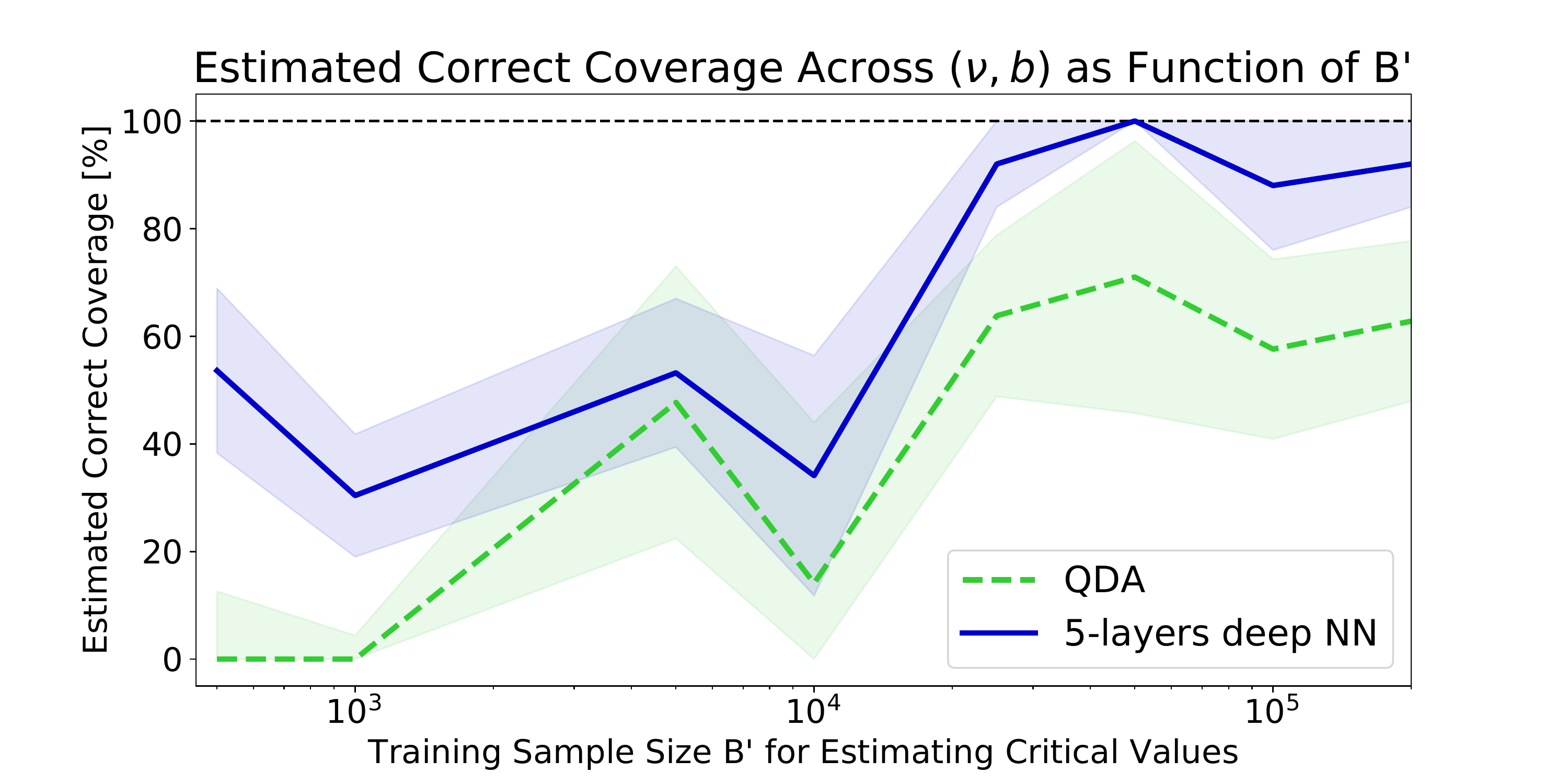}
    \caption{Using the strategy in Section~\ref{sec: applications} to choose \texttt{ACORE} components for the signal detection example. \textit{Left:} The cross entropy loss of the best four classifiers, shown as a function of $B$. In order of increasing loss: 5-layer DNN ([512, 256, 64, 32, 32] neurons, ReLu activations), QDA classifier, 3-layer DNN ([64, 32, 32] neurons, ReLu activations) and gradient boosted trees ($1000$ trees with maximum depth 5). Because the first two classifiers (the 5-layer DNN and QDA) achieve a very similar minimum loss, we consider both classifiers in the follow-up step.  \textit{Right}: Proportion of the $(\nu, b)$ parameter space where the best two classifiers pass our goodness-of-fit procedure with a nominal coverage of $90\%$. Both the mean value curves and the $\pm$ one standard deviation prediction bands are computed via logistic regression. Critical values are estimated via a two-layer deep quantile regression ([64,64] neurons, ReLu activations), which passed the diagnostic at the lowest sample size ($B'=25,000$, with the 5-layers DNN). Based on the results, we choose the 5-layer DNN with $B'=25,000$.}
    \label{fig:supp_mat_hep_example}
\end{figure*}

\section{Cross Entropy Loss Analysis}
\label{sec:cross_entropy_loss}

In this work, we use the cross entropy loss to measure the accuracy of the probabilistic predictions of the classifier.
That is, we calibrate the estimated odds function
$g(\theta,\x) := \widehat{\P}(Y=1 | \theta, \x)/\widehat{\P}(Y=0 | \theta, \x)$  
as follows:
Consider a sample point  $\{\theta,\x,y\}$ generated according to Algorithm \ref{alg:joint_y}.
Let
$p$ be a $\mbox{Ber}(y)$ distribution, and
$q$ be a  $\mbox{Ber}\left(\widehat{\P}(Y=1 | \theta, \x)\right)=\mbox{Ber}\left(\frac{g(\theta,\x)}{1+g(\theta,\x)}\right)$
distribution. The {\em cross entropy} between $p$ and $q$ is given by
\begin{align*} L_{\mbox{CE}}(g; \{\theta,\x,y\}) &= - y \log\left(\frac{g(\theta,\x)}{1+g(\theta,\x)}\right) \\
&\quad \quad \quad - (1-y) \log\left(\frac{1}{1+g(\theta,\x)}\right) \\
&= - y \log \left(g(\theta,\x) \right) + \log \left(1+g(\theta,\x) \right).
\end{align*}
For every $\x$ and $\theta$, the expected cross entropy  $\E[L_{\mbox{CE}}(g; \{\theta,\x,Y\})]$
is minimized by $g(\theta,\x)=\O(\theta,\x)$.
Thus we can measure the performance of an estimator $g$
of the odds
by the risk
$$ R_{\mbox{CE}}(g)=\E[L_{\mbox{CE}}(g; \{\theta,\X,Y\})] .$$
The cross entropy loss is not the only loss function that is minimized by the true odds function, but it is usually easy to compute in practice. It is also  well known that minimizing the cross entropy loss between the estimated distribution $q$ and the true distribution $p$ during training is equivalent to minimizing the Kullback-Leibler (KL) divergence between the two distributions, as 
$$KL(p ||  q) = H(p, q) - H(p),$$
where $H(p,q)$ is the cross entropy and $H(p)$ is the entropy of the true distribution. By Gibbs' inequality \citep{MacKay2002InformationTheory}, we have that $KL(p|| q) \geq 0$; hence the entropy $H(p)$ of the true distribution lower bounds the cross entropy with the minimum achieved when $p=q$. Hence, we can connect the cross entropy loss to the \texttt{ACORE} statistic.

\begin{Prop}\label{prop::cross_ent_lrt}
If the probabilistic classifier in \texttt{ACORE} achieves the minimum of the cross entropy loss, then the constructed \texttt{ACORE} statistic \eqref{eq:odds_ratio_statistic} is equal to the likelihood ratio statistic \eqref{eq::LRT}.
\end{Prop}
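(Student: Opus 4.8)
The plan is to reduce this proposition to the Fisher-consistency result already established in Proposition~\ref{prop::consistency}. The key link is that attaining the minimum cross entropy loss forces the estimated class probabilities to equal the true ones, which is exactly the hypothesis $\widehat{\P}(Y=1\mid\theta,\x)=\P(Y=1\mid\theta,\x)$ required there; the conclusion then follows immediately.

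First I would recall, from the cross entropy analysis preceding the statement, that for each fixed $(\theta,\x)$ the conditional expected loss $\E[L_{\mbox{CE}}(g;\{\theta,\x,Y\})\mid\theta,\x]$ is minimized by $g(\theta,\x)=\O(\theta,\x)$. This can be seen either by differentiating the conditional expected loss in $g$, or through the equivalent Kullback--Leibler formulation: writing $p=\mbox{Ber}(\P(Y=1\mid\theta,\x))$ and $q=\mbox{Ber}(g/(1+g))$, the loss equals the cross entropy $H(p,q)$, and $KL(p\|q)=H(p,q)-H(p)\geq 0$ by Gibbs' inequality, with equality if and only if $q=p$. Strict convexity of the Bernoulli cross entropy in its second argument makes this pointwise minimizer unique.

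Next, since the risk $R_{\mbox{CE}}(g)=\E[L_{\mbox{CE}}(g;\{\theta,\X,Y\})]$ is the expectation of the pointwise loss over the joint law of $(\theta,\X,Y)$, a classifier attaining the global minimum of $R_{\mbox{CE}}$ must satisfy $g(\theta,\x)=\O(\theta,\x)$ throughout the support of that joint law. Because $G$ has larger support than every $F_\theta$ and the proposal $r_{\Theta}$ charges all of $\Theta$, this support covers the region over which both the \texttt{ACORE} statistic \eqref{eq:odds_ratio_statistic} and the likelihood ratio \eqref{eq::LRT} are evaluated. Finally, the map $p\mapsto p/(1-p)$ is a bijection on $[0,1)$, so $\widehat{\O}(\theta,\x)=\O(\theta,\x)$ is equivalent to $\widehat{\P}(Y=1\mid\theta,\x)=\P(Y=1\mid\theta,\x)$. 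This is precisely the condition assumed in Proposition~\ref{prop::consistency}, which then yields equality of the \texttt{ACORE} statistic and the likelihood ratio statistic.

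The main obstacle is the measure-theoretic gap between global risk minimization and the pointwise ("for every $\theta$ and $\x$") equality demanded by Proposition~\ref{prop::consistency}: minimizing $R_{\mbox{CE}}$ only determines $g$ almost everywhere with respect to the joint law, so one must invoke the support assumptions on $G$ and $r_{\Theta}$ to ensure the equality holds wherever the two statistics are actually computed. Everything else is a routine consequence of strict convexity of the cross entropy and the bijection between odds and probabilities.
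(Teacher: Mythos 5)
Your proposal is correct and follows essentially the same route as the paper: the paper's proof is a one-line reduction to Proposition~\ref{prop::consistency} via the fact that the expected cross entropy is minimized if and only if $\widehat{\O}(\theta,\x)=\O(\theta,\x)$, which is exactly your argument. Your additional attention to the almost-everywhere versus pointwise distinction and the support of $G$ and $r_{\Theta}$ is a legitimate refinement that the paper's terse proof glosses over.
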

\begin{proof}[Proof of Prop~\ref{prop::cross_ent_lrt}]
The proof follows from  Proposition~\ref{prop::consistency} and the expected cross entropy loss is minimized if and only if $\widehat{\O}(\theta,\x)=\O(\theta,\x)$.
\end{proof}

In addition, we show that the convergence of the class posterior implies the convergence of the cross entropy to the entropy of the true distribution. This supports our decision to use the cross entropy loss when selecting the probabilistic classifier and sample size $B$.

\begin{lemma}\label{lemma::ce_int}
If for every $\theta \in \Theta$
$$q := \widehat{\P}(Y=1|\theta,\X) \xrightarrow[B \longrightarrow\infty]{\enskip \P \enskip} p:=\P(Y=1|\theta,\X),$$
then $H(p, q) \xrightarrow[B \longrightarrow\infty]{\enskip \P \enskip} H(p).$
\end{lemma}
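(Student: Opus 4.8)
The plan is to reduce the statement to an application of the continuous mapping theorem, after recording that for Bernoulli distributions both the cross entropy and the entropy are explicit elementary functions of the parameters. Writing $\Phi(a,b):=-a\log b-(1-a)\log(1-b)$ with the usual convention $0\log 0=0$, we have $H(p,q)=\Phi(p,q)$ and $H(p)=\Phi(p,p)$, so the claim is precisely that $\Phi(p,q)\xrightarrow[B\to\infty]{\P}\Phi(p,p)$. Equivalently, since $\Phi(p,q)-\Phi(p,p)=p\log(p/q)+(1-p)\log((1-p)/(1-q))=\mathrm{KL}(\mathrm{Ber}(p)\,\|\,\mathrm{Ber}(q))\ge 0$, it suffices to show this nonnegative quantity tends to $0$ in probability.

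The key structural observation I would exploit is that $p=\P(Y=1|\theta,\X)$ does \emph{not} depend on $B$; only $q=\widehat{\P}(Y=1|\theta,\X)$ does. Hence in the limit $B\to\infty$ I treat $p$ as fixed and apply the continuous mapping theorem to the \emph{single-variable} map $q\mapsto\Phi(p,q)$. The fact I would verify is that, for every frozen value $p_{0}\in[0,1]$, the map $q\mapsto\Phi(p_{0},q)$ is continuous at $q=p_{0}$ on $[0,1]$: for $p_{0}\in(0,1)$ this is immediate from continuity of $\log$ on $(0,1)$; at $p_{0}=0$ we have $\Phi(0,q)=-\log(1-q)$, which is continuous at $q=0$; and at $p_{0}=1$ we have $\Phi(1,q)=-\log q$, continuous at $q=1$. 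Combining this continuity with $q\xrightarrow{\P}p$ (applied conditionally on the $\sigma(\theta,\X)$-measurable variable $p$, and then un-conditioning) yields $\Phi(p,q)\xrightarrow[B\to\infty]{\P}\Phi(p,p)=H(p)$, which is the assertion.

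The main obstacle, and the reason the conditioning step is essential rather than cosmetic, is that $\Phi$ is \emph{not} jointly continuous at the corners $(0,0)$ and $(1,1)$ of $[0,1]^{2}$: the term $-a\log b$ can fail to vanish as $(a,b)\to(0,0)$ along suitable paths (for instance $a=1/\log(1/b)$ gives $-a\log b=1$). A naive appeal to ``$\Phi$ is continuous on the diagonal'' is therefore false at the endpoints. Freezing $p$ first removes this pathology entirely, because once the coefficient $p$ is held fixed the singular term $-p\log q$ is either identically absent ($p=0$) or multiplied by a strictly positive constant with $q\to p>0$ bounded away from the singularity; in either case the one-dimensional map above is genuinely continuous at $q=p$. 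I would organize the write-up so that this reduction is made explicit before invoking the theorem, since everything else is routine.

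Finally, I would note that the identity $\Phi(p,q)-\Phi(p,p)=\mathrm{KL}(\mathrm{Ber}(p)\,\|\,\mathrm{Ber}(q))$ gives an equivalent and perhaps more transparent phrasing: the claim is just that the Kullback--Leibler divergence between the estimated and true Bernoulli conditionals vanishes in probability, which is exactly the ``$q\to p$'' hypothesis fed through a function that is zero on the diagonal and continuous there once $p$ is frozen.
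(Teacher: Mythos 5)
Your argument proves a different statement from the one the paper's proof establishes. In the paper, $H(p,q)$ is the \emph{integrated} cross entropy $-\sum_{y\in\{0,1\}}\int_{\mathcal{X}\times\Theta} p\log\left(q\right)\,{\rm d}\P(\x,\theta)$ --- this is how the authors rewrite it at the start of their proof, and it is the quantity relevant to the surrounding discussion, since the lemma is meant to justify using the cross-entropy \emph{risk} $R_{\mbox{CE}}(g)=\E[L_{\mbox{CE}}(g;\{\theta,\X,Y\})]$ to select the classifier and $B$. After integrating over $(\x,\theta)$, the only randomness left in $H(p,q)$ is that of the size-$B$ training sample. You instead read $H(p,q)$ as the per-point Bernoulli cross entropy $\Phi(p,q)=-p\log q-(1-p)\log(1-q)$, a random variable in $(\theta,\X)$ as well, and prove that it converges in probability to $\Phi(p,p)$. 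That pointwise claim is correct, and your treatment of it is in fact more careful than the paper's: the observation that $\Phi$ is not jointly continuous at the corners $(0,0)$ and $(1,1)$, so that one must freeze $p$ (equivalently, pass to an a.s.-convergent subsequence and use the one-variable continuity of $b\mapsto\Phi(p(\omega),b)$ at $b=p(\omega)$), is a genuine subtlety that a naive appeal to the continuous mapping theorem would miss.

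The missing step is the passage from convergence of the integrand to convergence of the integral over $\mathcal{X}\times\Theta$. This is precisely where the paper invokes the dominated convergence theorem, and it is not automatic: the integrand $-p\log q$ is unbounded wherever $q$ can approach $0$ on a region where $p>0$, so one needs a domination or uniform-integrability argument for the family $\{-p\log q_B\}_{B}$. (To be fair, the paper's own justification --- the bound $|q|\le 1$ --- does not dominate $-p\log q$ either, so the published proof shares this weakness; but a complete write-up should state the required integrability condition and carry out the interchange of limit and expectation explicitly.) As it stands, your proposal establishes convergence of the conditional cross entropy at each $(\theta,\X)$ but not convergence of its expectation, which is what the lemma is used for.
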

\begin{proof}[Proof of Lemma~\ref{lemma::ce_int}]
We can rewrite the cross entropy $H(p, q)$ and entropy $H(p)$ as 
\begin{align*}
    H(p, q) & = - \sum_{y \in \{0,1\} } \int_{\mathcal{X} \times \Theta} p \log\left(q \right) {\rm d}\P(\x,\theta), \\
    H(p) & = - \sum_{y \in \{0,1\} } \int_{\mathcal{X} \times \Theta} p \log\left(p \right) {\rm d}\P(\x,\theta).
\end{align*}  
In addition, for any $(\X, \theta)$, it also holds that $\left|q\right| \leq 1$. The lemma follows by combining the dominated convergence theorem with the continuous mapping theorem for the logarithm.
\end{proof}

\section{Comparison with Monte Carlo Synthetic Likelihood-Based Methods}\label{supp_mat: comp_gp_mc}
%\ann{6/28. This section is not in our arXiv version. We should review it again.}

%\annlee{6/28, 3:25 pm, done editing this section}

%\nic{6/28: 4PM. Section reviewed and comments merged.}

In this section we compare the performance of \texttt{ACORE} with Monte-Carlo (MC) synthetic likelihood-based methods, more specifically Gaussian process (GP) interpolation \cite{Frate2017GP_LFI}. 
%\annlee{4/10, The terms ``multiple data'' and ``samples'' could be confusing so it's best to be consistent with notation and terminology (such as distinguishing between sample size versus sample points). Same thing with the words ``parametric assumptions'' and ``parameter values'' further down in the paragraph; we are not referring to the same ``parameters'' there.} 
The latter method first simulates multiple sample points for 
%\annlee{I think it may be ``for'' different values, but ``at'' different locations so the word choice depends a bit on the context}\nic{I see. I think for reads better here as well actually.}
a few different values of $\theta$. For each fixed $\theta$, one fits a Gaussian synthetic likelihood function. The GP likelihood model is then used to smoothly interpolate across the parameter space by fitting a mean function $m(\theta)$ and a covariance function $\Sigma(\theta)$. As a note, \citet{Cranmer2019frontier_sim_based} point out that such MC methods are less efficient than methods that estimate the likelihood ratio directly because of the need to first estimate the entire likelihood.

%\annlee{Shortened, so as not to repeat the first sentence of the preceding paragraph or the details of Sec 4.} 
For our comparison, we use the two toy examples described in Section~\ref{sec: toy_example} and Table~\ref{tab:toy_example_setup}. To allocate $B$ sample points for the GP interpolation, we use the following strategy: For $q \in \{5, 10 ,25\}$, first choose $\theta_1, ..., \theta_q$ on an evenly spaced grid across the parameter space. Then, generate $N=B/q$ sample points $\mathbf{X}_1, ... ,\mathbf{X}_N$ at each location $\theta$. 

Table~\ref{tab:comparison_with_MC} summarizes the results. Unlike Table~\ref{tab:cross_ent_poisson}, we do not report the cross-entropy loss because GP interpolation is not a classification algorithm; instead we report the mean squared error in estimating the likelihood ratio across the parameter space. Our results show that when the simulated data at each $\theta$ are approximately Gaussian, as in the Poisson example, MC-based GP interpolation provides a better approximation of the likelihood ratio due to its  parametric assumptions. However, when the parametric assumptions are not valid, as in the GMM example, MC-based GP fails to approximate the likelihood ratio regardless of how large $N$ or $B$ are. In such settings, we do better with a fully nonparametric approach. As a note, MC-based GP uses the asymptotic $\chi^2$ approximation by Wilks' theorem to determine the critical values of the confidence sets. In our experiments, using quantile regression for critical values instead (as in \texttt{ACORE}) led to a significant increase in power for the GP likelihood models:
%\annlee{Our results (add reference to table or figure?) indicate that choosing the critical values according to Algorithm [fill in] instead could significantly improve the power of GP-based methods; in our examples, the power increased} \nic{I did not include a table for that, as it would mean include another table.}
from $\approx 0.48$ to $\approx 0.51$ for the Poisson example, and from $\approx 0.02$ to $\approx 0.2$ for the GMM example.

\begin{table*}[!ht]
\resizebox{0.49\textwidth}{!}{%
\begin{tabular}{|c|c|c|c|c|}
\multicolumn{5}{c}{Poisson Example} \\ \hline
\textit{$B$} & \textit{Classifier} & \textit{90 $\%$ Mean Squared} & \textit{Average} & \textit{Size of} \\
 & & \textit{Error Interval} & \textit{Power} & \textit{Confidence Set [\%]} \\ \hline
\multirow{6}{*}{100} & MLP & [2.14, 989.78] & 0.27 & 72.8 $\pm$ 16.4\\
 & NN & [4.14, 4074.65] & 0.25 & 75.6 $\pm$ 23.2\\
 & QDA & [0.41, 34.79] & 0.41 & \textbf{60.1 $\pm$ 14.9}\\
 & G.P. (5) & \textbf{[0.05, 4.09]} & 0.47 & \textbf{53.5 $\pm$ 9.2}\\
 & G.P. (10) & \textbf{[0.06, 4.97]} & \textbf{0.48} & \textbf{53.2 $\pm$ 10.7}\\
 & G.P. (25) & \textbf{[0.03, 6.54]} & \textbf{0.48} & \textbf{53.2 $\pm$ 10.8}\\ \hline
\multirow{6}{*}{500} & MLP & [0.86, 22.45] & 0.38 & 62.2 $\pm$ 19.1\\
 & NN & [1.95, 32.78] & 0.37 & \textbf{64.2 $\pm$ 17.3}\\
 & QDA & [0.08, 6.95] & 0.45 & \textbf{55.5 $\pm$ 10.8}\\
 & G.P. (5) & \textbf{[0.01, 0.81]} & \textbf{0.49} & \textbf{52.4 $\pm$ 5.6}\\
 & G.P. (10) & \textbf{[0.02, 0.85]} & \textbf{0.49} & \textbf{52.0 $\pm$ 5.4}\\
 & G.P. (25) & \textbf{[0.01, 1.12]} & \textbf{0.48} & \textbf{52.5 $\pm$ 6.0}\\ \hline
\multirow{6}{*}{1,000} & MLP & [0.81, 21.44] & 0.42 & \textbf{58.8 $\pm$ 17.0}\\
 & NN & [1.77, 17.88] & 0.45 & \textbf{56.1 $\pm$ 16.2}\\
 & QDA & [0.06, 2.83] & \textbf{0.49} & \textbf{52.1 $\pm$ 9.0}\\
 & G.P. (5) & \textbf{[0.01, 0.48]} & \textbf{0.49} & \textbf{52.3 $\pm$ 5.0}\\
 & G.P. (10) & \textbf{[0.01, 0.46]} & \textbf{0.48} & \textbf{52.5 $\pm$ 5.3}\\
 & G.P. (25) & \textbf{[0.01, 0.45]} & \textbf{0.48} & \textbf{52.6 $\pm$ 5.5}\\ \hline
 \hline
 - & {\color{red} \textbf{Exact}} & - &  {\color{red} \textbf{0.54}}  & {\color{red} \textbf{45.0 $\pm$ 4.9}} \\ \hline
\end{tabular} \\ [5ex]
}
\resizebox{0.5\textwidth}{!}{%
\begin{tabular}{|c|c|c|c|c|}
\multicolumn{5}{c}{GMM Example} \\ \hline
\textit{$B$} & \textit{Classifier} & \textit{90 $\%$ Mean Squared} & \textit{Average} & \textit{Size of}  \\
 & & \textit{Error Interval ($\times 10^3$)} & \textit{Power} & \textit{Confidence Set [\%]} \\ \hline
\multirow{6}{*}{100} & MLP & \textbf{[0.34, 1.46]} & \textbf{0.87} & \textbf{14.5 $\pm$ 4.5}\\
 & NN & [1.33, 11.77] & 0.49 & 52.1 $\pm$ 24.7\\
 & QDA & [2.88, 3.56] & 0.16 & 84.0 $\pm$ 21.8\\
 & G.P. (5) & [3.35, 3.82] & 0.02 & 97.7 $\pm$ 8.8\\
 & G.P. (10) & [3.34, 3.82] & 0.03 & 96.9 $\pm$ 9.5\\
 & G.P. (25) & [3.36, 3.82] & 0.02 & 98.2 $\pm$ 6.1\\ \hline
\multirow{6}{*}{500} & MLP & \textbf{[0.44, 1.35]} & \textbf{0.90} & 12.1 $\pm$ 2.8\\
 & NN & [0.99, 2.65] & 0.57 & 44.0 $\pm$ 23.3\\
 & QDA & [3.14, 3.73] & 0.16 & 83.8 $\pm$ 22.2\\
 & G.P. (5) & [3.39, 3.83] & 0.00 & 100.0 $\pm$ 0.0\\
 & G.P. (10) & [3.39, 3.83] & 0.01 & 99.1 $\pm$ 5.5\\
 & G.P. (25) & [3.38, 3.83] & 0.00 & 99.8 $\pm$ 1.5\\ \hline
\multirow{6}{*}{1,000} & MLP & \textbf{[0.53, 1.17]} & \textbf{0.90} & \textbf{12.1 $\pm$ 2.8}\\
 & NN & [0.57, 2.04] & 0.71 & 30.2 $\pm$ 18.5\\
 & QDA & [3.26, 3.94] & 0.14 & 85.7 $\pm$ 20.1\\
 & G.P. (5) & [3.39, 3.98] & 0.00 & 100.0 $\pm$ 0.0\\
 & G.P. (10) & [3.39, 3.98] & 0.00 & 100.0 $\pm$ 0.0\\
 & G.P. (25) & [3.39, 3.98] & 0.00 & 99.9 $\pm$ 1.2\\ \hline
 \hline
 - & {\color{red} \textbf{Exact}} & - &  {\color{red} \textbf{0.92}}  & {\color{red} \textbf{9.5 $\pm$ 2.0}} \\ \hline
\end{tabular}
}
\caption{%\annlee{6/28 edited caption} \remove{Results for Poisson model (left) and GMM (right), when comparing \texttt{ACORE} with Gaussian Process (GP) interpolation in the same setting as in Section~\ref{sec: toy_example}. GP interpolation is fit at different number of parameter values ($5, 10$ and $25$), according to the strategy highlighted in text.}
Results for \texttt{ACORE} (MLP, NN, QDA) and Gaussian Process  interpolation (GP for $q=5, 10, 25$; see text) for the two toy examples, Poisson example (left) and GMM example (right), of Section~\ref{sec: toy_example}.  The tables list the mean squared error (MSE) between the estimated and true likelihood, the power (averaged over $\theta$) and the size of confidence sets, for different values of $B$ and for different classifiers. We report a 90\% confidence interval for the MSE, together with the mean and standard deviation of the size of the estimated 90\% confidence set for $\theta$. Best results for each training sample size $B$ are marked in bold-faced. 
%\annlee{We may not need to include the results in the caption as each appendix is relatively short (and the captions long).} \remove{We see GP interpolation providing a better approximation when simulated data are approximately Gaussian (Poisson model, left); when the parametric assumptions are not valid, as in the GMM case, GP interpolation cannot provide a good approximation regardless of the training sample size $B$.} 
All fitted classifiers produce valid $90\%$ confidence sets for $\theta$ according to our diagnostics.
}
\label{tab:comparison_with_MC}
\end{table*}

\section{Comparison with Calibrated Approximate Ratio of Likelihood Classifiers}\label{supp_mat: comp_carl}

% \annlee{6/29, 6:30 pm, done editing this section  }
% \rafael{6/29 8:15 Done merging most of Ann's comments and correcting very few typos}

In this section we compare the performance of \texttt{ACORE} with the calibrated approximate ratio of likelihood (CARL) estimator by \citet{Cranmer2015LikRatio}. CARL approximates the likelihood ratio $\Lambda(D; \Theta_0) = \mathcal{L}(D; \theta_0)/\mathcal{L}(D; \theta_1)$ by turning the density ratio  estimation into a supervised classification problem,   where a probabilistic classifier is trained to separate samples from $F_{\theta_0}$ and $F_{\theta_1}$. As such, CARL classifiers are ``doubly parameterized'' by $\theta_0$ and $\theta_1$, whereas the \texttt{ACORE} classifier is parameterized by a {\em single} parameter $\theta$ in the definition of the odds of $F_\theta$ versus $G$. 
%\ann{I'm moving part of the next sentence to the second paragraph. I feel that it's important to describe the main algorithmic difference between CARL and ACORE (especially in our next journal paper), as they at a first glance otherwise may be bunched together as just some likelihood-ratio trick approaches.}

In our study, we include three different CARL classifiers, implemented with the \texttt{MADMINER} neural network-based software \citep{Brehmer2019madminer, Brehmer2019MadMiner_code}: (a) a shallow perceptron with 100 neurons (equivalent to the MLP used in Section~\ref{sec: toy_example}), (b) a 2-layer deep network with 20 neurons per layer, and (c) a 2-layer deep network with 20 and 50 neurons in the two layers respectively.\footnote{Changing the number of neurons per layers did not seem to provide a significant difference in performance for the 2-layer deep networks. Number of epochs and learning rate were manually tuned (with a search in the range $[20, 200]$ and $10^{\{ -6, ..., -2\}}$ respectively).} To allocate $B$ sample points for interpolation we devised two schemes: (i) a uniform sampling, and (ii) a Monte Carlo sampling over the parameter space.   For (i), we uniformly sample $B$ parameters and then generate a sample point $\mathbf{X}$ at each parameter value. For (ii), we first select evenly  spaced parameters $\theta_{0,1}, ..., \theta_{0,q}$ and $\theta_{1,1} ..., \theta_{1, q}$, for the numerator and the denominator respectively. We set  $q \in \{10, 20, 30\}$, resulting in $N=B/q$  sample points $\mathbf{X}_1, ..., \mathbf{X}_N$  at each $\theta$ location. Because the $\chi^2$ approximation by Wilks' theorem did not yield valid confidence sets for CARL classifiers, we computed critical values as in \texttt{ACORE} Algorithm~\ref{alg:estimate_thresholds2}.

Table~\ref{tab:comparison_with_carl} shows the results of \texttt{ACORE} and \texttt{CARL} for the synthetic data in Section~\ref{sec: toy_example} and Table~\ref{tab:toy_example_setup}. For both the Poisson and GMM examples, CARL classifiers yield a higher mean squared error in estimating the likelihood ratio, as well as lower power and larger confidence intervals.

\begin{table*}[!ht]
\resizebox{.495\textwidth}{!}{%
\begin{tabular}{|c|c|c|c|c|}
\multicolumn{5}{c}{Poisson Example} \\ \hline
\textit{$B$} & \textit{Classifier} & \textit{90 $\%$ Mean Squared} & \textit{Average} & \textit{Size of} \\
 & & \textit{Error Interval} & \textit{Power} & \textit{Confidence Set [\%]} \\ \hline
\multirow{9}{*}{200} & MLP & [3.25, 1305.45] & 0.17 & 82.7 $\pm$ 15.0\\
 & NN & [2.88, 185.47] & 0.34 & 66.9 $\pm$ 20.7\\
 & QDA & \textbf{[0.20, 25.16]} & \textbf{0.45} & \textbf{55.8 $\pm$ 13.2}\\ \cline{2-5}
 & MLP (MC) & [2.51, 38.10] & 0.24 & 76.1 $\pm$ 21.3\\
 & (20,20) DNN (MC) & [2.53, 25.41] & 0.19 & 80.9 $\pm$ 17.8\\
 & (50,20) DNN (MC) & [2.76, 26.00] & 0.19 & 81.3 $\pm$ 17.8\\ \cline{2-5}
 & MLP (U) & [2.03, 45.19] & 0.19 & 81.3 $\pm$ 19.2\\
 & (20,20) DNN (U) & [2.95, 19.76] & 0.24 & 76.6 $\pm$ 19.8\\
 & (50,20) DNN (U) & [2.43, 18.72] & 0.23 & 77.8 $\pm$ 20.1\\ \hline \hline
\multirow{9}{*}{800} & MLP & [1.69, 450.81] & 0.27 & 73.0 $\pm$ 20.1\\
 & NN & [1.47, 19.32] & 0.42 & \textbf{59.2 $\pm$ 15.9}\\
 & QDA & \textbf{[0.04, 5.03]} & \textbf{0.49} & \textbf{52.0 $\pm$ 9.3}\\ \cline{2-5}
 & MLP (MC) & [2.38, 24.50] & 0.22 & 78.5 $\pm$ 21.0\\
 & (20,20) DNN (MC) & [2.49, 21.49] & 0.25 & 75.3 $\pm$ 18.8\\
 & (50,20) DNN (MC) & [2.52, 18.13] & 0.23 & 76.9 $\pm$ 20.1\\ \cline{2-5}
 & MLP (U) & [2.04, 23.24] & 0.20 & 79.9 $\pm$ 17.4\\
 & (20,20) DNN (U) & [2.48, 17.36] & 0.22 & 77.9 $\pm$ 17.6\\
 & (50,20) DNN (U) & [2.25, 17.87] & 0.21 & 78.9 $\pm$ 20.0\\ \hline \hline
\multirow{9}{*}{1,800} & MLP & [0.81, 19.11] & 0.37 & \textbf{63.7 $\pm$ 21.1}\\
 & NN & [1.09, 11.27] & 0.44 & \textbf{56.9 $\pm$ 14.3}\\
 & QDA & \textbf{[0.03, 1.60]} & \textbf{0.50} & \textbf{51.0 $\pm$ 6.6}\\ \cline{2-5}
 & MLP (MC) & [2.13, 35.39] & 0.18 & 82.4 $\pm$ 17.7\\
 & (20,20) DNN (MC) & [2.74, 28.15] & 0.20 & 80.3 $\pm$ 19.7\\
 & (50,20) DNN (MC) & [2.62, 28.15] & 0.18 & 81.9 $\pm$ 19.5\\ \cline{2-5}
 & MLP (U) & [2.15, 25.51] & 0.19 & 81.4 $\pm$ 19.8\\
 & (20,20) DNN (U) & [2.34, 15.93] & 0.23 & 77.0 $\pm$ 22.6\\
 & (50,20) DNN (U) & [2.38, 17.97] & 0.19 & 81.6 $\pm$ 17.2\\ \hline
\hline
- & {\color{red} \textbf{Exact}} & - &  {\color{red} \textbf{0.54}}  & {\color{red} \textbf{45.0 $\pm$ 4.9}} \\ \hline
\end{tabular} \\ [5ex]
}
\resizebox{.505\textwidth}{!}{%
\begin{tabular}{|c|c|c|c|c|}
\multicolumn{5}{c}{GMM Example} \\ \hline
\textit{$B$} & \textit{Classifier} & \textit{90 $\%$ Mean Squared} & \textit{Average} & \textit{Size of}  \\
 & & \textit{Error Interval ($\times 10^{3}$)} & \textit{Power} & \textit{Confidence Set [\%]} \\ \hline
\multirow{9}{*}{200} & MLP & \textbf{[0.56, 1.69]} & \textbf{0.88} & \textbf{14.2 $\pm$ 8.2}\\
 & NN & [1.13, 4.17] & 0.50 & 51.5 $\pm$ 24.8\\
 & QDA & [3.05, 3.63] & 0.12 & 87.6 $\pm$ 19.7\\ \cline{2-5}
 & MLP (MC) & [3.03, 3.61] & 0.27 & 73.5 $\pm$ 20.5\\
 & (20,20) DNN (MC) & [3.13, 3.70] & 0.25 & 75.6 $\pm$ 20.0\\
 & (50,20) DNN (MC) & [3.16, 3.67] & 0.28 & 72.8 $\pm$ 19.6\\ \cline{2-5}
 & MLP (U) & [3.01, 3.72] & 0.30 & 70.2 $\pm$ 21.2\\
 & (20,20) DNN (U) & [3.18, 3.87] & 0.24 & 76.3 $\pm$ 21.5\\
 & (50,20) DNN (U) & [3.12, 3.92] & 0.27 & 73.0 $\pm$ 21.2\\ \hline \hline
\multirow{9}{*}{800} & MLP & \textbf{[0.89, 1.59]} & \textbf{0.90} & \textbf{12.1 $\pm$ 2.5}\\
 & NN & [0.78, 2.31] & 0.69 & 32.0 $\pm$ 18.9\\
 & QDA & [3.23, 3.66] & 0.14 & 86.1 $\pm$ 20.4\\ \cline{2-5}
 & MLP (MC) & [3.02, 3.58] & 0.30 & 70.8 $\pm$ 20.4\\
 & (20,20) DNN (MC) & [3.10, 3.63] & 0.27 & 73.6 $\pm$ 20.2\\
 & (50,20) DNN (MC) & [3.03, 3.47] & 0.30 & 70.5 $\pm$ 18.5\\ \cline{2-5}
 & MLP (U) & [3.01, 3.62] & 0.26 & 74.7 $\pm$ 20.6\\
 & (20,20) DNN (U) & [3.12, 3.64] & 0.26 & 74.4 $\pm$ 19.2\\
 & (50,20) DNN (U) & [3.00, 3.56] & 0.29 & 71.8 $\pm$ 19.9\\ \hline \hline
\multirow{9}{*}{1,800} & MLP & \textbf{[0.33, 1.55]} & \textbf{0.90} & \textbf{11.5 $\pm$ 2.6}\\
 & NN & \textbf{[0.32, 1.57]} & 0.83 & \textbf{19.3 $\pm$ 10.3}\\
 & QDA & [3.29, 3.81] & 0.16 & 83.7 $\pm$ 22.2\\ \cline{2-5}
 & MLP (MC) & [2.99, 3.54] & 0.33 & 67.5 $\pm$ 19.6\\
 & (20,20) DNN (MC) & [3.02, 3.54] & 0.31 & 69.7 $\pm$ 19.3\\
 & (50,20) DNN (MC) & [2.95, 3.51] & 0.38 & 63.1 $\pm$ 15.9\\ \cline{2-5}
 & MLP (U) & [2.99, 3.45] & 0.33 & 67.7 $\pm$ 17.0\\
 & (20,20) DNN (U) & [3.02, 3.56] & 0.33 & 67.3 $\pm$ 18.0\\
 & (50,20) DNN (U) & [2.98, 3.41] & 0.38 & 63.1 $\pm$ 15.3\\ \hline
 \hline
 - & {\color{red} \textbf{Exact}} & - &  {\color{red} \textbf{0.92}}  & {\color{red} \textbf{9.5 $\pm$ 2.0}} \\ \hline
\end{tabular}
}
\caption{
Results for \texttt{ACORE} (MLP, NN, QDA) and %\remove{calibrated approximate ratio of likelihood (}
CARL or uniform (U) and Monte-Carlo (MC) sampling schemes %\remove{; see text) for the two toy examples,}  
in the Poisson example (left) and GMM example (right) %\remove{,}  
settings  of Section~\ref{sec: toy_example}.  The tables list the mean squared error (MSE) between the estimated and true likelihood, the power (averaged over $\theta$) and the size of confidence sets, for different values of $B$ and for different classifiers. We report a 90\% confidence interval for the MSE, together with the mean and standard deviation of the size of the estimated 90\% confidence set for $\theta$. The best results for each training sample size $B$ are marked in bold-faced.
}
\label{tab:comparison_with_carl}
\end{table*}

\section{Runtime Analysis}\label{supp_mat: computational_costs}

In this section we provide a runtime analysis for constructing one \texttt{ACORE} confidence set for the two examples  in Section~\ref{sec: toy_example} and Table~\ref{tab:cross_ent_poisson}. We also provide a running time comparison with the two methods described in Sections~\ref{supp_mat: comp_gp_mc} and \ref{supp_mat: comp_carl}. This analysis was performed on  a 8-Core Intel Xeon 3.33GHz X5680 CPU.

The procedure for constructing confidence sets with \texttt{ACORE} is  outlined in Algorithm~\ref{alg:conf_reg}. In this analysis we break  the computation into 4 steps: (i) odds ratio training as described by Algorithm~\ref{alg:joint_y}, (ii) computing the test statistic \eqref{eq:odds_ratio_statistic} for the observed data, (iii) computing the test statistic \eqref{eq:odds_ratio_statistic} in the $B^{'}$ sample as described by Algorithm~\ref{alg:estimate_thresholds_conf2} and (iv) quantile regression algorithm training. Table~\ref{tab:running_times} summarizes our running times results. \texttt{ACORE} constructs one confidence set in less than $20$ and $30$ seconds for Poisson and GMM examples respectively. The main computational bottleneck is step (iii), while the computation time of step (i) increases with the sample size $B$.

Figure~\ref{fig:running_times_comp} shows the results of comparing confidence set  construction runtimes with MC GP and CARL classifiers. For both the Poisson and the GMM examples, we only consider the best performing \texttt{ACORE} classifiers, and the two CARL classifiers with 20 hidden units in both layers. Results show  \texttt{ACORE} classifiers are comparable with GP interpolation in terms of running times, while CARL classifiers tend to have significantly longer runtimes.

\begin{table*}[!ht]
\resizebox{0.98\textwidth}{!}{%
\begin{tabular}{|c|c|c|c|c|c|c|}
\hline
\multicolumn{7}{c}{Running Times to Generate a Confidence Set (Seconds) -- Poisson Example} \\ \hline
\textit{$B$} & \textit{Classifier} & \textit{Odds Ratio} & \textit{Odds Ratio} & \textit{Calculate \eqref{eq:odds_ratio_statistic} for} & \textit{Quantile Regression} & \textit{Total Running}\\ 
& & \textit{Training} & \textit{Prediction} & \textit{$B^{'}$ Samples} & \textit{Training} & \textit{Time} \\ \hline
\multirow{3}{*}{100} & MLP & 0.38 $\pm$ 0.31 & 0.42 $\pm$ 0.10 & 10.40 $\pm$ 0.71 & 0.66 $\pm$ 0.28 & 11.86 $\pm$ 1.02\\
 & NN & 0.03 $\pm$ 0.01 & 0.35 $\pm$ 0.12 & 9.83 $\pm$ 4.99 & 0.82 $\pm$ 0.67 & 11.02 $\pm$ 5.73\\
 & QDA & 0.02 $\pm$ 0.01 & 0.18 $\pm$ 0.11 & 4.50 $\pm$ 2.65 & 0.58 $\pm$ 0.21 & 5.29 $\pm$ 2.96\\ \hline
\multirow{3}{*}{500} & MLP & 1.62 $\pm$ 0.39 & 0.46 $\pm$ 0.04 & 11.49 $\pm$ 0.45 & 0.68 $\pm$ 0.09 & 14.26 $\pm$ 0.61\\
 & NN & 0.13 $\pm$ 0.01 & 0.54 $\pm$ 0.03 & 13.28 $\pm$ 0.26 & 0.66 $\pm$ 0.04 & 14.60 $\pm$ 0.29\\
 & QDA & 0.13 $\pm$ 0.01 & 0.16 $\pm$ 0.01 & 4.12 $\pm$ 0.09 & 0.65 $\pm$ 0.06 & 5.05 $\pm$ 0.14\\ \hline
\multirow{3}{*}{1,000} & MLP & 2.65 $\pm$ 0.88 & 0.48 $\pm$ 0.08 & 11.93 $\pm$ 1.93 & 0.73 $\pm$ 0.06 & 15.79 $\pm$ 2.30\\
 & NN & 0.24 $\pm$ 0.04 & 0.77 $\pm$ 0.21 & 17.90 $\pm$ 2.82 & 0.67 $\pm$ 0.10 & 19.59 $\pm$ 2.83\\
 & QDA & 0.27 $\pm$ 0.08 & 0.17 $\pm$ 0.05 & 4.37 $\pm$ 1.02 & 0.64 $\pm$ 0.16 & 5.45 $\pm$ 1.29\\
\hline
\end{tabular} \\ [5ex]
}\newline
\vspace*{.1cm}
\newline
\resizebox{0.98\textwidth}{!}{%
\begin{tabular}{|c|c|c|c|c|c|c|}
\hline
\multicolumn{7}{c}{Running Times to Generate a Confidence Set (Seconds) -- GMM Example} \\ \hline
\textit{$B$} & \textit{Classifier} & \textit{Odds Ratio} & \textit{Odds Ratio} & \textit{Calculate \eqref{eq:odds_ratio_statistic} for} & \textit{Quantile Regression} & \textit{Total Running}\\ 
& & \textit{Training} & \textit{Prediction} & \textit{$B^{'}$ Samples} & \textit{Training} & \textit{Time} \\ \hline
\multirow{3}{*}{100} & MLP & 5.89 $\pm$ 1.66 & 0.45 $\pm$ 0.18 & 10.79 $\pm$ 2.06 & 0.60 $\pm$ 0.21 & 17.74 $\pm$ 3.92\\
 & NN & 0.03 $\pm$ 0.00 & 0.29 $\pm$ 0.06 & 8.60 $\pm$ 2.84 & 0.61 $\pm$ 0.18 & 9.53 $\pm$ 3.05\\
 & QDA & 0.03 $\pm$ 0.01 & 0.14 $\pm$ 0.04 & 3.81 $\pm$ 1.38 & 0.52 $\pm$ 0.14 & 4.50 $\pm$ 1.57\\ \hline
\multirow{3}{*}{500} & MLP & 9.89 $\pm$ 1.34 & 0.43 $\pm$ 0.06 & 11.64 $\pm$ 0.64 & 0.69 $\pm$ 0.06 & 22.64 $\pm$ 1.83\\
 & NN & 0.17 $\pm$ 0.01 & 0.52 $\pm$ 0.04 & 13.11 $\pm$ 0.79 & 0.63 $\pm$ 0.07 & 14.43 $\pm$ 0.85\\ 
 & QDA & 0.16 $\pm$ 0.01 & 0.15 $\pm$ 0.02 & 4.05 $\pm$ 0.26 & 0.59 $\pm$ 0.08 & 4.94 $\pm$ 0.35\\ \hline
\multirow{3}{*}{1,000} & MLP & 13.40 $\pm$ 2.60 & 0.47 $\pm$ 0.09 & 11.76 $\pm$ 0.79 & 0.68 $\pm$ 0.11 & 26.31 $\pm$ 3.36\\
 & NN & 0.34 $\pm$ 0.09 & 0.70 $\pm$ 0.11 & 17.15 $\pm$ 1.90 & 0.71 $\pm$ 0.17 & 18.90 $\pm$ 2.06\\
 & QDA & 0.32 $\pm$ 0.05 & 0.17 $\pm$ 0.05 & 4.75 $\pm$ 1.26 & 0.62 $\pm$ 0.07 & 5.87 $\pm$ 1.36\\ \hline
\end{tabular}
}
\caption{
Runtimes in seconds for constructing a confidence set with \texttt{ACORE}  %\remove{, in the same setting as Table~\ref{tab:cross_ent_poisson} and} 
for the %\remove{two toy examples,} 
Poisson example (top) and GMM example (bottom). The %\remove{process} 
 procedure for constructing confidence sets is %\remove{shown}
 outlined  in Algorithm~\ref{alg:conf_reg}, and is split in 4 steps (see text). The rightmost column  shows total runtimes.
}
\label{tab:running_times}
\end{table*}

\begin{figure*}[!ht]
    \centering
    \includegraphics[width=0.775\textwidth]{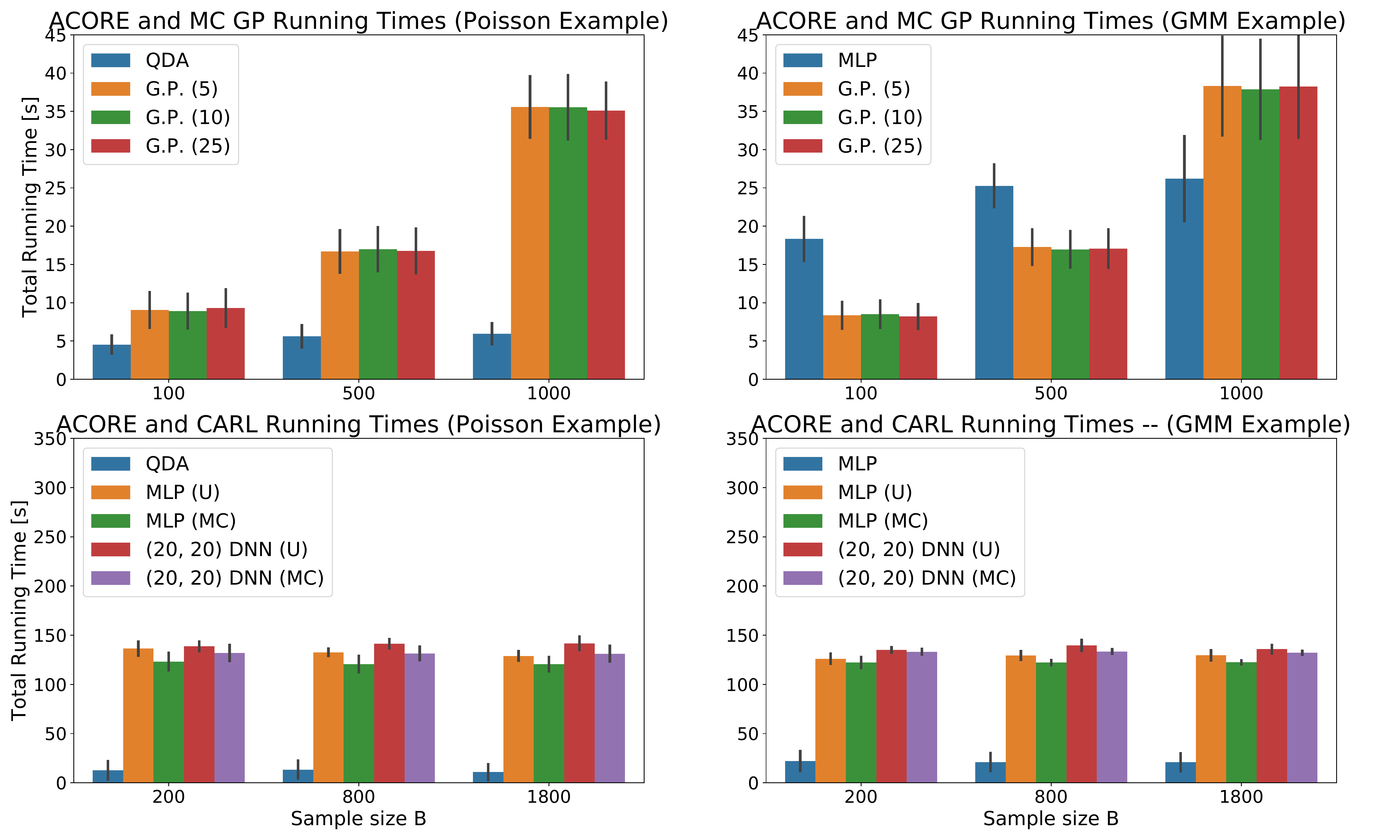}
    \caption{
    Runtimes in seconds for constructing a confidence set  for the
    Poisson example (left panels) and GMM example (right panels). The best \texttt{ACORE} classifier runtime is compared with Gaussian process interpolation (GP) for $q=\{5,10,25\}$, and the two smaller CARL classifiers
    for both sampling schemes. See text for details. Confidence bars are built with a one standard deviation interval around the mean.
    }
    \label{fig:running_times_comp}
\end{figure*}

\end{document}